\documentclass[journal]{IEEEtran}

\usepackage[T1]{fontenc}
\usepackage[utf8]{inputenc}
\usepackage[english]{babel}
\usepackage{amsmath, amssymb, amsfonts, bm}
\newcommand{\abs}[1]{\left|#1\right|}
\newcommand{\norm}[1]{\left\lVert #1\right\rVert}

\newcommand{\AM}[1]{\textcolor{black}{#1}}
\usepackage{xcolor}
\newenvironment{anup}
{\begingroup\color{black}}
{\endgroup}

\usepackage{amsthm}

\newtheorem{corollary}{Corollary}
\newtheorem{proposition}{Proposition}
\newtheorem{remark}{Remark}
\renewcommand{\qedsymbol}{}
\usepackage{graphicx}
\usepackage[caption=false,font=footnotesize]{subfig}
\usepackage{tikz}
\usetikzlibrary{matrix, calc, arrows.meta, positioning, angles, quotes, decorations.pathmorphing, shapes.geometric}

\usepackage{pgfplots}
\usepgfplotslibrary{fillbetween} 
\pgfplotsset{compat=1.18}

\usepackage{url}
\usepackage{glossaries}
\usepackage{booktabs}       % For high-quality tables
\usepackage{algorithm}      % For algorithm environment
\usepackage{algpseudocode}  % For Cormen-style pseudocode
\usepackage{siunitx}        % For SI units (e.g., \SI{10}{\hertz})
\usepackage{empheq}         % For highlighting equations (e.g., with boxes)
\usepackage{footnote}       % Better control of footnotes
\usepackage{microtype}      % Improves text spacing and justification
\usepackage{setspace}       % For line spacing control
\usepackage{enumitem}       % For customizing lists
\usepackage{soul}           % For underlining, highlighting, etc. (use sparingly)
\usepackage{comment}        % For comment blocks
\usepackage{tabularx}
\usepackage{makecell}
\usepackage{subfig}
\usepackage{stfloats}
\usepackage{cite}

\newacronym{3gpp}{3GPP}{3rd generation partnership project}
\newacronym{5g}{5G}{5th generation}
\newacronym{6g}{6G}{sixth-generation}
\newacronym{ack}{ACK}{acknowledgement}
\newacronym{afdm}{AFDM}{affine frequency division multiplexing}
\newacronym{aoi}{AoI}{age-of-information}
\newacronym{ap}{AP}{access point}
\newacronym{awgn}{AWGN}{additive white Gaussian noise}
\newacronym{b5g}{B5G}{beyond 5G}
\newacronym{ber}{BER}{bit-error rate}
\newacronym{bp}{BP}{belief propagation}
\newacronym{bs}{BS}{base station}
\newacronym{cscg}{CSCG}{circularly symmetric complex Gaussian}
\newacronym{csi}{CSI}{channel state information}
\newacronym{dd}{DD}{delay--doppler}
\newacronym{dir}{DIR}{data information rate}
\newacronym{dof}{DoF}{degrees-of-freedom}
\newacronym{drl}{DRL}{deep rl}
\newacronym{dtmc}{DTMC}{discrete-time markov chain}
\newacronym{embb}{eMBB}{enhanced mobile broadband}
\newacronym{fdma}{FDMA}{frequency division multiple access}
\newacronym{fec}{FEC}{forward error correction}
\newacronym{fifo}{FIFO}{first-in first out}
\newacronym{fim}{FIM}{fisher information matrix}
\newacronym{ga}{GA}{genetic algorithm}
\newacronym{gap}{GAP}{generalized assignment problem}
\newacronym{geo}{GEO}{geosynchronous earth orbit}
\newacronym{gsl}{GSL}{ground-to-satellite link}
\newacronym{icsi}{ICSI}{imperfect channel state information}
\newacronym{irsa}{IRSA}{irregular repetition slotted aloha}
\newacronym{isa}{ISA}{isolated spectral allocation}
\newacronym{isac}{ISAC}{integrated sensing and communication}
\newacronym{isic}{ISIC}{imperfect successive interference cancellation}
\newacronym{isl}{ISL}{inter-satellite link}
\newacronym{iot}{IoT}{internet of things}
\newacronym{leo}{LEO}{low earth orbit}
\newacronym{lmmse}{LMMSE}{linear minimum mean square error}
\newacronym{los}{LoS}{line-of-sight}
\newacronym{mab}{MAB}{multi-armed bandit}
\newacronym{mdp}{MDP}{markov decision process}
\newacronym{mgap}{MGAP}{multi-level generalized assignment problem}
\newacronym{mimo}{MIMO}{multiple-input multiple-output}
\newacronym{miso}{MISO}{multiple-input single-output}
\newacronym{mmimo}{mMIMO}{massive mimo}
\newacronym{mmse}{MMSE}{minimum mean square error}
\newacronym{mmtc}{mMTC}{massive machine-type communications}
\newacronym{mp}{MP}{message passing}
\newacronym{mu}{MU}{multi-user}
\newacronym{nack}{NACK}{negative ack}
\newacronym{noma}{NOMA}{non-orthogonal multiple access}
\newacronym{nr}{NR}{new radio}
\newacronym{ntn}{NTN}{non-terrestrial network}
\newacronym{ofdm}{OFDM}{orthogonal frequency division multiplexing}
\newacronym{ofdma}{OFDMA}{orthogonal frequency-division multiple access}
\newacronym{oma}{OMA}{orthogonal multiple access}
\newacronym{otfs}{OTFS}{orthogonal time frequency space}
\newacronym{pdf}{PDF}{probability density function}
\newacronym{pwf}{PWF}{phase warping function}
\newacronym{qos}{QoS}{quality of service}
\newacronym{ran}{RAN}{radio access network}
\newacronym{reir}{REIR}{radar estimation information rate}
\newacronym{rl}{RL}{reinforcement learning}
\newacronym{rms}{RMS}{root mean square}
\newacronym{rs}{RS}{rate-splitting}
\newacronym{rsma}{RSMA}{rate-splitting multiple access}
\newacronym{sagin}{SAGIN}{space--air--ground integrated network}
\newacronym{sc}{SC}{superposition coding}
\newacronym{sdma}{SDMA}{space-division multiple access}
\newacronym{sic}{SIC}{successive interference cancellation}
\newacronym{sinr}{SINR}{signal-to-interference-plus-noise ratio}
\newacronym{siso}{SISO}{single-input single-output}
\newacronym{snr}{SNR}{signal-to-noise ratio}
\newacronym{sp}{SP}{signal processing}
\newacronym{tacae}{TCAE}{time-averaged cost of actuation error}
\newacronym{tare}{TRE}{time-averaged reconstruction error}
\newacronym{tgp}{TGP}{tunable Gaussian pulse}
\newacronym{tdma}{TDMA}{time division multiple access}
\newacronym{tf}{TF}{time--frequency}
\newacronym{uav}{UAV}{unmanned aerial vehicle}
\newacronym{uc}{UC}{update-delivery cost}
\newacronym{ue}{UE}{user equipment}
\newacronym{urllc}{URLLC}{ultra-reliable and low-latency communications}
\newacronym{v2x}{V$2$X}{vehicle-to-everything}

\newacronym{af}{AF}{ambiguity function}
\newacronym{ftn}{FTN}{faster-than-Nyquist}
\newacronym{csit}{CSIT}{channel state information at the transmitter}
\newacronym{csir}{CSIR}{channel state information at the receiver}
\newacronym{acf}{ACF}{auto-correlation function}
\newacronym{crlb}{CRLB}{Cram\'er-Rao lower bound}
\newacronym{rrc}{RRC}{root raised cosine}
\newacronym{ul}{UL}{uplink}
\newacronym{semi-isac}{Semi-ISaC}{semi-integrated sensing and communication}
\newacronym{cu}{CU}{communication user}
\newacronym{rt}{RT}{radar target}
\newacronym{rcs}{RCS}{radar cross-section}
\newacronym{dl}{DL}{downlink}
\newacronym{op}{OP}{outage probability}
\newacronym{wssus}{WSSUS}{wide-sense stationary uncorrelated scattering}
\newacronym{sgp}{SGP}{standard Gaussian pulse}
\newacronym{gbps}{Gbps}{gigabits per second}
\newacronym{oobe}{OOBE}{out-of-band emission}
\newacronym{papr}{PAPR}{peak-to-average power ratio}
\newacronym{ipr}{IPR}{interference power ratio}
\newacronym{ici}{ICI}{inter-carrier interference}
\newacronym{isi}{ISI}{inter-symbol interference}
\newacronym{rr}{RR}{radar receiver}
\newacronym{dmrs}{DMRS}{demodulation reference signals}
\newacronym{pusch}{PUSCH}{physical uplink shared channel}
\newacronym{cp}{CP}{cyclic prefix}
\newacronym{ifft}{IFFT}{inverse fast Fourier transform}
\newacronym{fft}{FFT}{fast Fourier transform}
\newacronym{cfr}{CFR}{channel frequency response}
\newacronym{dft}{DFT}{discrete Fourier transform}
\newacronym{dpi}{DPI}{direct-path interference}
\newacronym{tx}{TX}{transmitter}
\newacronym{1tfde}{1T-FDE}{one-tap frequency-domain equalization}
\newacronym{ta}{TA}{timing advance}
\newacronym{wmmse}{WMMSE}{weighted minimum mean square error}
\newacronym{se}{SE}{spectral efficiency}
\newacronym{dc}{DC}{difference-of-convex}
\newacronym{sca}{SCA}{successive convex approximation}
\newacronym{nphard}{NP-hard}{non-deterministic polynomial-time hard}
\newacronym{lmi}{LMI}{linear matrix inequality}
\newacronym{soc}{SOC}{second-order cone}
\newacronym{sqp}{SQP}{sequential quadratic programming}
\newacronym{kpi}{KPI}{key performance indicator}
\newacronym{fr2}{FR2}{frequency range 2}
\newacronym{fdm}{FDM}{frequency division multiplexing}
\newacronym{rmse}{RMSE}{root mean square error}
\newacronym{kkt}{KKT}{Karush-Kuhn-Tucker}
\newacronym{tdl}{TDL}{tapped delay line}
\newacronym{adc}{ADC}{analog-to-digital converter}
\newacronym{isd}{ISD}{inter-site distance}
\newacronym{v2n}{V2N}{vehicle-to-network}
\newacronym{upa}{UPA}{uniform planar array}
\newacronym{uma}{UMa}{urban macro}
\newacronym{nlos}{NLOS}{non-line-of-sight}
\newacronym{cpi}{CPI}{coherent processing interval}
\newacronym{mmwave}{mmWave}{millimeter wave}
\newacronym{idft}{IDFT}{inverse discrete Fourier transform}
\newacronym{td}{TD}{time-domain}
\newacronym{fd}{FD}{frequency-domain}
\newacronym{dp}{DP}{direct-path}
\newacronym{ep}{EP}{echo-path}
\newacronym{mrc}{MRC}{maximum-ratio combining}
\newacronym{mse}{MSE}{mean square error}
\newacronym{bcd}{BCD}{block coordinate descent}
\newacronym{fpp}{FPP}{Feasible Point Pursuit}
\newacronym{qcqp}{QCQPs}{Quadratically constrained quadratic programs}
\newacronym{fp}{FP}{fractional programming}
\newacronym{sdp}{SDP}{semidefinite program}
\newacronym{ao}{AO}{alternating optimization}
\newacronym{psd}{PSD}{positive semi-definite}
\newacronym{ifi}{IFI}{inter-functionality interference}
\newacronym{iid}{i.i.d}{independent and identically distributed}
\newacronym{mi}{MI}{mutual-information}
\newacronym{fdsac}{FDSAC}{frequency domain S$\&$C}
\newacronym{sr}{SR}{sensing-rate}
\newacronym{cr}{CR}{communication-rate}
\newacronym{cc}{C-C}{communication-centric}
\newacronym{stc}{S-C}{sensing-centric}
\newacronym{mac}{MAC}{multiple access channel}
\newacronym{sandc}{S$\&$C}{sensing and communication}
\newacronym{upw}{UPW}{uniform planar wave}
\newacronym{usw}{USW}{uniform spherical wave}
\newacronym{nusw}{NUSW}{non-uniform spherical wave}
\newacronym{xr}{XR}{extended reality}
\newacronym{st}{ST}{sensing target} 

\begin{document}

\title{\AM{Rate-Splitting-Inspired Uplink ISAC: A Rate-Region Analysis}}

\author{{Anup~Mishra},~\IEEEmembership{Member,~IEEE,}
        {Israel~Leyva-Mayorga},~\IEEEmembership{Member,~IEEE,} and~{Petar~Popovski},~\IEEEmembership{Fellow,~IEEE} \vspace{-0.6cm}%
\thanks{A. Mishra, I. Leyva-Mayorga, and P. Popovski are with the Department of Electronic Systems, Aalborg University,  Denmark (e-mail: anmi@es.aau.dk; ilm@es.aau.dk; petarp@es.aau.dk). \textit{Corresponding author: Anup Mishra}.}}

\maketitle

\glsresetall
\begin{anup}
\begin{abstract}
\Gls{isac} enables \gls{sandc} functionalities to share spectrum, hardware, and signal-processing resources, but the resulting inter-functionality interference creates a fundamental receiver-design challenge in uplink operation. To this end, \gls{rs}-inspired \gls{isac} has been proposed as a flexible approach to inter-functionality interference management, whereby communication interference during sensing is
partially decoded and cancelled and partially treated as noise. In this work, we characterize the Pareto-optimal \gls{cr}--\gls{sr} rate region of \gls{rs}-inspired uplink \gls{isac} by jointly optimizing the sensing
illumination and communication-message split. Closed-form \gls{cr} and \gls{sr} expressions are derived while accounting for residual sensing-echo interference caused by target-response estimation uncertainty. We analytically prove that the resulting \gls{rs}-inspired region contains the \gls{noma}-inspired endpoint-order time-sharing region.
We further show that, under fixed sensing illumination, residual
sensing-echo interference can break the containment of the
\gls{oma}-inspired region by both the \gls{rs}- and
\gls{noma}-inspired regions. Joint sensing-illumination optimization
enlarges these non-orthogonal achievable regions and recovers the
maximum \gls{cr} at the zero-\gls{sr} endpoint. High-\gls{snr} and near-field large-array analyses characterize the asymptotic behaviour, and numerical results validate the analysis under both near- and far-field propagation. This is the first work to characterize the Pareto-optimal rate region of \gls{rs}-inspired uplink \gls{isac}.
\end{abstract}
\end{anup}
\begin{IEEEkeywords}
\gls{rs}, \gls{isac}, rate-region analysis.
\end{IEEEkeywords}
\glsresetall
\section{Introduction}
\Gls{isac} has emerged as a key capability for future wireless networks,
enabling \gls{sandc} functionalities to share spectrum, hardware
platforms, and signal processing resources
~\cite{Yuanwei_NOMA_ISaC,chen2025interference,Wei_Survey}. While this
integration improves resource utilization compared with conventional
designs that allocate separate resources to \gls{sandc}, it also creates
inter-functionality interference between the communication signal and
the sensing echo~\cite{Yuanwei_NOMA_ISaC,Zhao2024NearFieldISAC}. This
interference-management problem is particularly relevant in uplink
\gls{isac}, where the \gls{bs} receives the user signal and the target
echo over the same resources and must separate the two functionalities
at the receiver~\cite{Mishra2025RSInspiredISAC,Chiriyath2016}.
\AM{The resulting \gls{sandc} trade-off depends on the
receiver processing order and the spatial coupling between the
\gls{sandc} channels. This coupling further depends on the propagation regime. Under
conventional far-field propagation, the array response is primarily
governed by angular separation, whereas emerging large-aperture,
short-range \gls{isac} settings, including low-altitude/drone tracking
and other mobile-target sensing applications, can operate in the
radiative near field, where spherical-wave propagation introduces
both angular and range-domain structure
~\cite{Zhao2024NearFieldISAC,Cong2023NearFieldISAC,Hua2025NearFieldISAC}. These characteristics
make receiver-side interference management central to the achievable
\gls{sandc} trade-off and motivate the adaptation of
multiple-access principles to uplink \gls{isac}~\cite{Mishra@tutorial,Yuanwei_NOMA_ISaC,Mishra2025RSInspiredISAC}.}
\par \AM{Drawing on} multiple-access techniques for managing inter-user interference in communication-only systems, \gls{isac} studies have adapted orthogonal and non-orthogonal access principles to manage inter-functionality interference between \gls{sandc}~\cite{Chiriyath2016,Yuanwei_NOMA_ISaC}. Allocating separate resources to the two functionalities corresponds to an \gls{oma}-inspired design, with \gls{fdsac} as a representative example~\cite{Yuanwei_Uplink_ISaC1,Zhao2024NearFieldISAC}. \AM{While such orthogonalization avoids inter-functionality interference, it sacrifices the resource-sharing gain of \gls{isac}}, mirroring the spectral-efficiency limitation of orthogonal access in communication-only systems~\cite{Yuanwei_NOMA_ISaC,Mishra2025RSInspiredISAC}. \Gls{noma}-inspired \gls{isac}, in contrast, relies on receiver-side \gls{sic} to manage inter-functionality interference. This naturally yields two endpoint orders: \AM{the \gls{stc} order, where communication is
decoded and removed before sensing, and the \gls{cc} order, where
sensing is performed and the reconstructed echo is suppressed before
communication decoding~\cite{Mishra2025RSInspiredISAC}. Intermediate
\gls{sandc} operating points can then be obtained by time
sharing between these endpoint orders~\cite{Zhao2024NearFieldISAC}.} These orders reveal the \gls{sandc} trade-off, but also inherit a key limitation of the \gls{noma} principle: one functionality may become interference-limited when decoded or estimated in the presence of the other~\cite{Mishra2025RSInspiredISAC,Yuanwei_NOMA_ISaC}.
\par To overcome the limitations of \gls{oma}-inspired and \gls{noma}-inspired designs, \cite{Mishra2025RSInspiredISAC} introduced a \gls{rs}-inspired \gls{isac} framework that adapts the \gls{rs} principle to inter-functionality interference management. In communication-only systems, \gls{rs} splits a message into multiple parts, enabling inter-user interference to be partially decoded and partially treated as noise~\cite{Mishra2022}. In \gls{isac}, this principle is translated to the \gls{sandc} domain: the sensing functionality is recovered under a partial-cancellation structure, where part of the communication signal is decoded and removed before sensing and the remaining part is treated as noise. \AM{Hence, \gls{rs}-inspired \gls{isac} generalizes the two \gls{noma}-inspired endpoint processing orders through the  message-splitting mechanism, while preserving simultaneous use of the
shared \gls{sandc} resources~\cite{Mishra2025RSInspiredISAC}.}
\par \AM{Building on~\cite{Mishra2025RSInspiredISAC}, this paper characterizes the fundamental \gls{sandc} rate-region structure of \gls{rs}-inspired uplink \gls{isac}. We characterize the Pareto-optimal \gls{cr}--\gls{sr} trade-off through joint design of the sensing illumination and communication-message split, and establish its
relationship with the \gls{oma}- and \gls{noma}-inspired achievable regions. The analysis considers both near- and far-field propagation, together with high-\gls{snr} and  near-field large-array characterizations. We next position these developments within the existing literature.}
\vspace{-0.2cm}
\subsection{Related Works}
\par Early coexistence work characterized achievable
\gls{sandc} trade-offs through orthogonal sub-band and \gls{sic}-based inner bounds~\cite{Chiriyath2016}. Building on this
coexistence perspective, \cite{Zhang2023SemiISAC}
introduced a semi-\gls{isac} framework that partitions the bandwidth into communication-only, sensing-only, and mixed \gls{isac} blocks, and evaluated \gls{oma}- and \gls{noma}-inspired implementations
in terms of outage probability, ergodic \gls{cr}, and ergodic \gls{reir}. The role of inter-functionality \gls{sic} ordering was further examined in~\cite{OuyangImpactrevealSIC} through \gls{stc}- and \gls{cc}-based uplink \gls{isac} designs. More recently,
\cite{Mishra2025RSInspiredISAC} introduced an \gls{rs}-inspired uplink \gls{isac} framework that splits the communication message around the sensing operation, thereby providing more flexible inter-functionality
interference management than \gls{oma}- and \gls{noma}-inspired baselines.\footnote{\AM{Related work has also considered downlink \gls{noma}-inspired \gls{isac}, including designs that embed information into dedicated sensing waveforms and jointly optimize \gls{sandc}
performance~\cite{Wang2022NOMAInspired,Yuanwei_NOMA_ISaC}. Mixed near- and far-field propagation in \gls{isac} systems has also
been considered in~\cite{yoo2025exploring} through adaptive bandwidth
splitting. Broader surveys of multiple-access-enabled \gls{isac} and interference
management are provided in~\cite{chen2025interference,Yuanwei_NOMA_ISaC,Longfei2022a}.}}
\par \AM{In terms of the underlying rate-region setting, \cite{Chiriyath2016} derived a \gls{sandc} inner bound under a fixed \gls{stc} processing order. Within an uplink \gls{isac}
setting, \cite{Yuanwei_Uplink_ISaC1} likewise adopted a fixed \gls{stc} order, while optimizing the sensing waveform within that order. Reference \cite{OuyangImpactrevealSIC} subsequently considered both \gls{stc} and \gls{cc} orders and obtained intermediate \gls{sandc} operating points through time sharing between the two endpoint orders, with ideal sensing-echo removal assumed for the \gls{cc} receiver. Extending this framework to near-field propagation,~\cite{Zhao2024NearFieldISAC} incorporated the range- and angle-dependent channel structure induced by spherical-wave propagation, while retaining the endpoint-order time-sharing construction and ideal sensing-echo removal for the \gls{cc} receiver under a fixed sensing-matched beamformer. Such perfect sensing-echo removal is generally unattainable with finite
sensing resources, since target-response estimation errors leave residual echo-reconstruction mismatch that propagates into the post-sensing communication stage~\cite{Kay1993,Mishra2025RSInspiredISAC}. In contrast, the \gls{rs}-inspired framework in~\cite{Mishra2025RSInspiredISAC} introduced communication-message splitting around the sensing operation, thereby enabling flexible inter-functionality interference management, and accounted for the residual sensing-echo interference. The resulting \gls{sandc} performance trade-off was studied under a fixed sensing configuration, with the communication-message split optimized for communication performance.}
\par \AM{However, the Pareto-optimal
\gls{cr}--\gls{sr} region of \gls{rs}-inspired uplink \gls{isac},
obtained through the joint optimization of the sensing illumination and
communication-message split, remains uncharacterized. Consequently, the
relationship of this region with the \gls{oma}- and
\gls{noma}-inspired achievable regions is also unknown. This relationship
becomes particularly important once residual sensing-echo interference
due to finite target-response estimation accuracy is accounted for,
since the resulting coupling between sensing estimation and subsequent
communication decoding can fundamentally alter the relationship between
orthogonal and non-orthogonal rate regions relative to the
ideal-cancellation case. Beyond
the finite-\gls{snr} rate region, the corresponding high-\gls{snr}
behaviour and, under near-field propagation, the large-array limits of
the \gls{rs}-inspired architecture also remain unexplored.}
\vspace{-0.2cm}
\subsection{Contributions}
\par \AM{Motivated by these gaps, this paper develops a rate-region analysis of residual-aware \gls{rs}-inspired uplink \gls{isac} by jointly optimizing the sensing illumination and communication-message split. The analysis encompasses both far- and near-field propagation,
together with high-\gls{snr} and near-field large-array characterizations. Our main contributions are as follows.}
\begin{itemize}
    \item \AM{Building on the \gls{rs}-inspired receiver architecture, we derive closed-form \gls{cr} and \gls{sr} expressions for \gls{rs}-inspired uplink \gls{isac} as functions of the communication-message split and sensing illumination. The resulting
    expressions recover the \gls{stc} and \gls{cc} endpoint orders as   limiting cases and explicitly propagate target-response estimation  uncertainty into the post-sensing communication stage, revealing how the residual interference depends on sensing accuracy, sensing illumination, and the overlap between the \gls{sandc} channels.}

    \item \AM{This is the first work to characterize the Pareto-optimal \gls{cr}--\gls{sr} rate region of \gls{rs}-inspired uplink \gls{isac} by jointly optimizing the sensing illumination and communication-message split. Moreover, we analytically prove that the resulting \gls{rs}-inspired region contains the \gls{noma}-inspired endpoint-order time-sharing region. This differs fundamentally from the classical fixed-power Gaussian uplink \gls{mac}, where varying the \gls{rs} split only traverses the same dominant face achievable by time sharing between \gls{sic} decoding orders. We further show that, once residual sensing-echo interference is accounted for, a \gls{noma}-inspired region with fixed sensing illumination need not contain the \gls{oma}-inspired region, unlike under ideal sensing-echo cancellation. Joint optimization of the sensing illumination enlarges the achievable region and recovers the maximum \gls{cr} for non-orthogonal schemes at the zero-\gls{sr} endpoint.}

    \item \AM{We derive high-\gls{snr} and near-field large-array asymptotic characterizations for \gls{rs}-inspired uplink \gls{isac}. The high-\gls{snr} analysis shows that, for non-aligned \gls{sandc} channels, residual sensing interference affects the rate offsets but not the leading \gls{cr} and \gls{sr} slopes, whereas under full channel alignment it becomes slope-limiting for the affected \gls{sandc} stages. The large-array analysis establishes that, under aperture-aware near-field propagation, the \gls{rs}-inspired scheme, like its \gls{noma}-inspired counterparts, approaches finite large-array rate limits.}

    \item \AM{We validate the rate-region and asymptotic analyses numerically across near- and far-field propagation. The results show that \gls{rs}-inspired message splitting achieves \gls{sandc} operating points beyond \gls{noma}-inspired endpoint-order time sharing, quantify the communication loss caused by residual sensing-echo interference relative to ideal cancellation, particularly under stronger channel alignment, and confirm the finite large-array \gls{cr} and \gls{sr} limits predicted by the near-field analysis.}
\end{itemize}
\AM{These results establish the \gls{rs}-inspired uplink \gls{isac} rate
region as a generalization of the \gls{noma}-inspired rate region and
reveal how residual sensing interference alters its relationship with
the corresponding \gls{oma}-inspired benchmark. As in the Gaussian uplink \gls{mac}, \gls{rs} realizes the \gls{noma}-inspired time-sharing region within a single frame, avoiding inter-frame scheduling and decoding-order switching, while sensing-dependent coupling can further enlarge the achievable region.}
\par \textit{Notation:} Scalars, vectors, and matrices are denoted by lower-case, bold lower-case, and bold upper-case letters, respectively. The transpose, conjugate, and Hermitian transpose are denoted by \((\cdot)^T\), \((\cdot)^*\), and \((\cdot)^H\), respectively. The Euclidean norm and absolute value are denoted by \(\norm{\cdot}\) and \(\abs{\cdot}\). The \(N\times N\) identity matrix is denoted by \(\mathbf I_N\). The operators \(\mathbb{E}[\cdot]\), \(\det(\cdot)\), and \(\mathrm{vec}(\cdot)\) denote expectation, determinant, and vectorization, respectively. The Kronecker product is denoted by \(\otimes\), and \(\mathcal{CN}(\boldsymbol{\mu},\mathbf{C})\) denotes a \gls{cscg} with mean \(\boldsymbol{\mu}\) and covariance \(\mathbf{C}\).
\par \textit{Organization:} The remainder of this paper is organized as
follows. Section~\ref{sec:sysmod} presents the uplink \gls{isac} system
and signal models under near- and far-field propagation.
Section~\ref{sec:rsisac} derives the \gls{cr} and \gls{sr} expressions
for the \gls{rs}-inspired receiver and develops the high-\gls{snr} and
large-array analyses. Section~\ref{sec:rate_region_characterization}
characterizes the achievable \gls{cr}--\gls{sr} rate region and
establishes its relationship with the \gls{oma}- and
\gls{noma}-inspired benchmarks. Section~\ref{sec:num_results} provides
numerical results, and Section~\ref{sec:conclusion} concludes the paper.
Proofs of the main analytical results are provided in the appendices.

\section{System Model}
\label{sec:sysmod}
\begin{anup}
We consider an uplink \gls{isac} system comprising a dual-functional
\gls{bs} equipped with an $N$-element \gls{upa}, where $N=N_yN_z$.
The \gls{upa} is centered at the origin and deployed on the $y$--$z$
plane. Let $d$ denote the inter-element spacing and let each antenna
element occupy an area $A$. The corresponding array occupation ratio is defined as $\zeta \triangleq A/d^2 \in (0,1]$~\cite{Zhao2024NearFieldISAC}. A single-antenna \gls{cu} ($i=c$) and a single \gls{st} ($i=s$)
are considered.\footnote{\AM{The single-\gls{cu}, single-\gls{st} setup follows prior \gls{isac}
rate-region studies~\cite{Chiriyath2016,Zhang2023SemiISAC,Zhao2024NearFieldISAC}
and enables a tractable characterization of the fundamental
\gls{cr}--\gls{sr} trade-off. Extensions to multiple \glspl{cu} or \glspl{st} would introduce additional message-splitting, \gls{sic}, estimation, and residual interference interactions, while extended targets would require a
higher-dimensional target-response model in place of the scalar
coefficient $\beta$. These extensions lead to higher-dimensional
rate-region problems and are left for future work.}} Further, we consider both near- and far-field propagation
~\cite{Qu2023NearFieldISAC,Luo2023BeamSquint,Elbir2023NearField}.
For the near-field regime, we adopt the aperture-aware spherical-wave
array-channel formulation in~\cite{Zhao2024NearFieldISAC}, which yields
physically consistent large-array behaviour
~\cite{Lu2022XLArray,Dardari2020LIS,Bjornson2020PowerScaling}, while the
far-field regime is modelled using the conventional \gls{upw} response~\cite{Zhao2024NearFieldISAC,Wang2023NearFieldISAC}.
\par For $i\in\{c,s\}$, the location of node $i$ is parameterized by its
distance $r_i$ from the array center, elevation angle $\theta_i$, and
azimuth angle $\phi_i$. Define
\begin{equation}
\Psi_i \triangleq \sin\theta_i\cos\phi_i,\quad
\Phi_i \triangleq \sin\theta_i\sin\phi_i,\quad
\Omega_i \triangleq \cos\theta_i,
\end{equation}
such that the Cartesian position vector of node $i$ is
\begin{equation}
    \mathbf r_i
    =
    [r_i\Psi_i,\;r_i\Phi_i,\;r_i\Omega_i]^T.
\end{equation}
\begin{figure}[t]
    \centering
    \includegraphics[width=0.70\columnwidth]{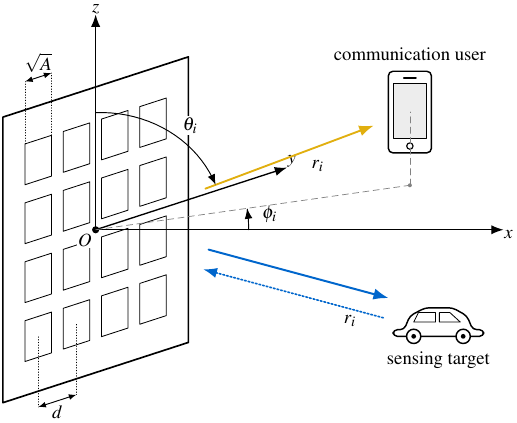}
    \caption{Uplink \gls{isac} system model and \gls{upa} geometry with
    the \gls{cu}, \gls{st}, and corresponding
    \gls{sandc} links.\vspace{-0.5cm}}
    \label{fig:system_model}
\end{figure}
The center of the $(n_y,n_z)$-th antenna element is given by
\begin{equation}
    \mathbf p_{n_y,n_z}
    =
    [0,\;n_y d,\;n_z d]^T,
\end{equation}
where
$n_y\in\{0,\pm1,\ldots,\pm(N_y-1)/2\}$ and
$n_z\in\{0,\pm1,\ldots,\pm(N_z-1)/2\}$.
Fig.~\ref{fig:system_model} illustrates the system model.
\par Under near-field propagation, the distance between node $i$ and the
$(n_y,n_z)$-th array element is
\begin{align}
r_{n_y,n_z,i}
&=
r_i\sqrt{
(n_y\epsilon_i-\Phi_i)^2
+
(n_z\epsilon_i-\Omega_i)^2
+
\Psi_i^2
},
\label{eq:nf_distance}
\end{align}
where $\epsilon_i\triangleq d/r_i$. Under far-field propagation, the
spherical wavefront is approximated by a \gls{upw}. Following~\cite{Zhao2024NearFieldISAC,Wang2023NearFieldISAC}, the corresponding
near- and far-field element-wise channel responses are, respectively,
\begin{align}
h_{n_y,n_z,i}^{\rm NF}
&=
\sqrt{
\frac{
A\!\left(
r_i^3\Psi_i^3
+
r_i\Psi_i(r_i\Omega_i-n_zd)^2
\right)
}{
4\pi r_{n_y,n_z,i}^5
}
}
e^{-j\frac{2\pi}{\lambda}r_{n_y,n_z,i}},
\nonumber
\\
h_{n_y,n_z,i}^{\rm FF}
&=
\sqrt{\frac{A}{4\pi r_i^2}}\,
e^{-j\frac{2\pi}{\lambda}
\left(r_i-n_yd\Phi_i-n_zd\Omega_i\right)}.
\label{eq:nf_ff_element_channel}
\end{align}
The adopted near-field response follows the simplified aligned-polarization
configuration in~\cite{Zhao2024NearFieldISAC} and captures spherical-wave
phase and element-dependent amplitude due to propagation, effective-aperture,
and polarization losses. In contrast,
the \gls{usw} model assumes uniform amplitude~\cite{Wang2023NearFieldISAC},
while the far-field \gls{upw} model further adopts linear phase
progression across the array.
\par By stacking the corresponding element-wise coefficients in
\eqref{eq:nf_ff_element_channel}, we obtain the near- and far-field
channel vectors $\mathbf h_i^{\rm NF}$ and $\mathbf h_i^{\rm FF}$,
respectively, for $i\in\{c,s\}$. Hereafter, $\mathbf h_i$ denotes the
channel vector under the propagation regime being considered, so that
the subsequent rate-region analysis applies to both regimes. The spatial
overlap between the \gls{sandc} channels is characterized by
\begin{equation}
\label{eq:rho}
\rho \triangleq
\frac{\abs{\mathbf h_c^H\mathbf h_s}^2}
{\norm{\mathbf h_c}^2\norm{\mathbf h_s}^2}
\in[0,1],
\end{equation}
where $\rho=0$ and $\rho=1$ correspond to orthogonal and fully aligned
channel vectors, respectively. Thus, the propagation regime enters
the subsequent analysis through $\norm{\mathbf h_c}^2$,
$\norm{\mathbf h_s}^2$, and overlap factor $\rho$.

\subsection{Signal Model}
We consider an uplink frame of length $L$ symbols, with
$\mathbf{h}_s,\mathbf{h}_c\in\mathbb{C}^{N\times 1}$ denoting the
\gls{sandc} channel vectors, respectively, under the
propagation regime being considered. Throughout, the receiver noise is modeled as standard \gls{awgn} with independent $\mathcal{CN}(0,1)$ entries.

\subsubsection{Monostatic Sensing Signal}
The \gls{bs} transmits a unit-modulus sensing pulse sequence
$\mathbf{s}_s\in\mathbb{C}^{L\times 1}$ satisfying $\abs{s_{s,\ell}}^2=1,\ \forall \ell\in\{1,\ldots,L\}$, through a normalized sensing beamformer
$\mathbf{w}\in\mathbb{C}^{N\times 1}$ with $\norm{\mathbf{w}}^2=1$.
Accordingly, the transmitted sensing signal matrix is
\begin{equation}
\mathbf{X}_s
=
\sqrt{p_s}\,\mathbf{w}\mathbf{s}_s^H
\in\mathbb{C}^{N\times L},
\end{equation}
where $p_s\in\mathbb{R}_+$ denotes the sensing transmit power. Under the
monostatic single-target model, the target response matrix is written
as
\begin{equation}
\mathbf{G}
=
\beta\mathbf{h}_s\mathbf{h}_s^T
\in\mathbb{C}^{N\times N},
\end{equation}
where $\beta\sim\mathcal{CN}(0,\alpha_s)$ is the unknown complex target
reflection coefficient and $\alpha_s\in\mathbb{R}_+$ denotes its average
power. Following~\cite{Zhao2024NearFieldISAC}, we consider a tracked-target
setting in which the target geometry determining $\mathbf h_s$ is
available at the \gls{bs} over the considered frame, while the complex
target-response coefficient $\beta$ remains unknown and is estimated
from the received sensing echo. Accordingly, the sensing task considered
here corresponds to tracking/monitoring after target acquisition, rather
than initial target detection or localization. The sensing echo
component received at the \gls{bs} over one frame is therefore
\begin{equation}
\sqrt{p_s}\,\mathbf{G}\mathbf{w}\mathbf{s}_s^H
=
\sqrt{p_s}\,\beta\,\mathbf{h}_s\mathbf{h}_s^T
\mathbf{w}\mathbf{s}_s^H
\in\mathbb{C}^{N\times L}.
\end{equation}
\subsubsection{\gls{rs}-Inspired Uplink Communication Signal}
The single-antenna \gls{cu} employs \gls{rs} and decomposes its message into two independent Gaussian streams $\mathbf{s}_{c,1},\mathbf{s}_{c,2}\in\mathbb{C}^{L\times 1}$ satisfying $\mathbb{E}\!\left[\mathbf{s}_{c,1}\mathbf{s}_{c,1}^H\right]
=
\mathbb{E}\!\left[\mathbf{s}_{c,2}\mathbf{s}_{c,2}^H\right]
=
\mathbf{I}_L.$ The corresponding transmit powers are given by $p_{c,1}=(1-\alpha)p_c$ and $p_{c,2}=\alpha p_c,$ where $p_c\in\mathbb{R}_+$ denotes the total uplink communication power and $\alpha\in[0,1]$ is the \gls{rs} split factor~\cite{Mishra2025RSInspiredISAC}.\footnote{\AM{As a scalar control
parameter, $\alpha$ can be conveyed to the \gls{cu} through conventional
control signalling\cite{Mishra2025RSInspiredISAC,Aditya_RSMA_experimental}.}} The transmitted communication signal over one frame is therefore
\begin{equation}
\mathbf{X}_c
=
\sqrt{p_{c,1}}\,\mathbf{s}_{c,1}^H
+
\sqrt{p_{c,2}}\,\mathbf{s}_{c,2}^H
\in\mathbb{C}^{1\times L},
\end{equation}
and the corresponding received signal at the \gls{bs} is given by
\begin{equation}
\mathbf{h}_c\mathbf{X}_c
=
\sqrt{p_{c,1}}\,\mathbf{h}_c\mathbf{s}_{c,1}^H
+
\sqrt{p_{c,2}}\,\mathbf{h}_c\mathbf{s}_{c,2}^H
\in\mathbb{C}^{N\times L}.
\end{equation}
\end{anup}
\vspace{-0.5cm}
\section{\Gls{rs}-Inspired Uplink \gls{isac}}\label{sec:rsisac}
We now derive the \gls{cr} and \gls{sr} of the
\gls{rs}-inspired uplink \gls{isac} receiver.\footnote{\AM{The communication channel $\mathbf h_c$ is assumed to be
available at the \gls{bs} over the considered frame. As
in~\cite{Zhao2024NearFieldISAC}, the target geometry determining
$\mathbf h_s$ is assumed to be tracked and available at the \gls{bs},
while $\beta$ remains unknown and is estimated from the sensing
observation. Imperfect knowledge of $\mathbf h_c$ or uncertainty in
additional target parameters would not alter the prescribed receiver
processing sequence, but would require explicitly modelling the resulting
channel uncertainty and its impact on decoding performance and the
achievable rate region, which is left for future work.}}
\AM{Following~\cite{Mishra2025RSInspiredISAC}, the receiver adopts the decoding order
$s_{c,1}\!\rightarrow\!\beta\!\rightarrow\!s_{c,2}$, thereby allowing
the message split to control how much communication interference is
removed before sensing estimation and how much communication is decoded
after sensing-echo cancellation.} The split factor
$\alpha\in[0,1]$ interpolates between the two endpoint-\gls{sic}
strategies: $\alpha=0$ recovers the \gls{stc} order, while $\alpha=1$
recovers the \gls{cc} order~\cite{OuyangImpactrevealSIC}. The
corresponding \gls{noma}- and \gls{rs}-inspired receiver processing
structures are illustrated in Fig.~\ref{fig:receiver_architectures}. \footnote{\AM{Within a frame, the \gls{rs}-inspired receiver adds one
communication decoding/cancellation stage relative to either unsplit
endpoint receiver, while retaining the same sensing stage. The underlying
message-splitting and successive-decoding operations have practical
precedent in \gls{rsma} systems~\cite{Aditya_RSMA_experimental}.
The \gls{noma}-inspired time-sharing benchmark instead requires switching
between endpoint processing orders across frames.}} Thus, the \gls{rs}-inspired model generalizes the \gls{noma}-inspired
endpoint benchmark in~\cite{Zhao2024NearFieldISAC}. The following
subsections derive the corresponding \gls{cr}, \gls{sr}, asymptotic
behaviour, and rate-region characterization.
\begin{figure}[t]
    \centering
    \includegraphics[width=\columnwidth]{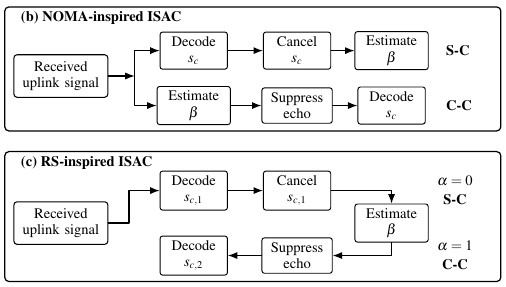}
    \caption{\AM{Receiver processing in \gls{noma}- and \gls{rs}-inspired
    uplink \gls{isac}.}}
    \label{fig:receiver_architectures}\vspace{-0.5cm}
\end{figure}
\vspace{-0.3cm}
\subsection{First-Stream \gls{cr}}
We first consider the decoding of stream $\mathbf{s}_{c,1}$. Under the symbol-level communication decoding considered here,
$\mathbf{h}_c$ and the marginal interference covariance remain
invariant across the $L$ symbols of a frame, so the corresponding rate
can be evaluated on a per-symbol basis~\cite{Zhao2024NearFieldISAC}. At this
stage, the second communication stream and the sensing echo are treated
as interference. Under Gaussian signaling and conditioned on the known
channel realization, the resulting interference-plus-noise covariance
matrix, $\mathbf{R}_1(\mathbf{w},\alpha)\in\mathbb{C}^{N\times N}$, is
given by
\begin{equation}
\label{eq:R1_cov}
\mathbf{R}_1(\mathbf{w},\alpha)
\triangleq
\mathbf{I}_N
+
p_{c,2}\mathbf{h}_c\mathbf{h}_c^H
+
p_s\alpha_s\abs{\mathbf{h}_s^T\mathbf{w}}^2\,\mathbf{h}_s\mathbf{h}_s^H.
\end{equation}
Accordingly, an achievable rate for the first communication stream can be written as
\begin{equation}
\label{eq:R1}
R_{c,1}(\mathbf{w},\alpha)
=
\log_2\det\!\left(
\mathbf{I}_N
+
p_{c,1}\mathbf{h}_c\mathbf{h}_c^H\mathbf{R}_1^{-1}(\mathbf{w},\alpha)
\right).
\end{equation}
Since the desired signal covariance $p_{c,1}\mathbf{h}_c\mathbf{h}_c^H$ is rank one,  \eqref{eq:R1} can be rewritten using the matrix determinant lemma, as
\begin{equation}
\label{eq:R1_scalar}
R_{c,1}(\mathbf{w},\alpha)
=
\log_2\!\left(
1+p_{c,1}\mathbf{h}_c^H\mathbf{R}_1^{-1}(\mathbf{w},\alpha)\mathbf{h}_c
\right).
\end{equation}
\par The following proposition provides an explicit characterization of the first-stream rate.
\begin{proposition}
\label{prop:R1_closed}
Under the Gaussian signaling model above, the achievable rate of the first communication stream admits the closed-form expression in \eqref{eq:R1_closed_form}.
\end{proposition}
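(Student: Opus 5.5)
The plan is to evaluate the quadratic form $\mathbf h_c^H\mathbf R_1^{-1}(\mathbf w,\alpha)\mathbf h_c$ in \eqref{eq:R1_scalar} explicitly. Since $\mathbf R_1$ is the identity plus a rank-two perturbation, its inverse can be obtained by two successive applications of the Sherman--Morrison formula. Define the effective sensing-interference power $q_s\triangleq p_s\alpha_s\abs{\mathbf h_s^T\mathbf w}^2$, so that $\mathbf R_1=\mathbf I_N+p_{c,2}\mathbf h_c\mathbf h_c^H+q_s\mathbf h_s\mathbf h_s^H$.

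\textbf{Step 1 (remove the echo term).} Set $\mathbf B\triangleq\mathbf I_N+q_s\mathbf h_s\mathbf h_s^H$. Sherman--Morrison gives
\[
\mathbf B^{-1}=\mathbf I_N-\frac{q_s\mathbf h_s\mathbf h_s^H}{1+q_s\norm{\mathbf h_s}^2}.
\]
Using the definition of $\rho$ in \eqref{eq:rho}, this yields
\[
a\triangleq\mathbf h_c^H\mathbf B^{-1}\mathbf h_c=\norm{\mathbf h_c}^2\Bigl(1-\frac{q_s\norm{\mathbf h_s}^2\rho}{1+q_s\norm{\mathbf h_s}^2}\Bigr)=\norm{\mathbf h_c}^2\,\frac{1+q_s\norm{\mathbf h_s}^2(1-\rho)}{1+q_s\norm{\mathbf h_s}^2}.
\]

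\textbf{Step 2 (remove the second-stream term).} Apply Sherman--Morrison again to $\mathbf R_1=\mathbf B+p_{c,2}\mathbf h_c\mathbf h_c^H$. This gives
\[
\mathbf h_c^H\mathbf R_1^{-1}\mathbf h_c=a-\frac{p_{c,2}a^2}{1+p_{c,2}a}=\frac{a}{1+p_{c,2}a}.
\]

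\textbf{Step 3 (substitute).} Insert the result into \eqref{eq:R1_scalar} with $p_{c,1}=(1-\alpha)p_c$ and $p_{c,2}=\alpha p_c$. This gives
\[
R_{c,1}=\log_2\Bigl(1+\frac{(1-\alpha)p_c\,a}{1+\alpha p_c\,a}\Bigr)=\log_2\frac{1+p_c a}{1+\alpha p_c a},
\]
with $a$ as in Step 1. The expression should then be rearranged to match the form in \eqref{eq:R1_closed_form}, for instance by clearing the denominator $1+q_s\norm{\mathbf h_s}^2$. The result depends only on $\norm{\mathbf h_c}^2$, $\norm{\mathbf h_s}^2$, $\rho$, $\abs{\mathbf h_s^T\mathbf w}^2$ and $\alpha$, so it holds in both propagation regimes.

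\textbf{Checks and main obstacle.} The algebra itself is routine. The main risk is matching the authors' exact normalization of \eqref{eq:R1_closed_form}, which may be written with $\rho$ and the norms combined differently. I would sanity-check the result against three limits. For $\alpha=1$ the rate should be $0$. For $q_s=0$ it should reduce to $\log_2\bigl((1+p_c\norm{\mathbf h_c}^2)/(1+\alpha p_c\norm{\mathbf h_c}^2)\bigr)$. For $\rho=1$ with $q_s\to\infty$, $a$ should vanish, so the echo completely blocks the stream.
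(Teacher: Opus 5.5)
Your proposal is correct. Your Step~2 expression $a/(1+p_{c,2}a)$ coincides with the paper's \eqref{eq:app_q_rho} once numerator and denominator are multiplied by $1+q_s\norm{\mathbf h_s}^2$. For the mapping, your $q_s$ is the paper's $b$, but your $a$ is not the paper's $a=p_{c,2}$.

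The only difference from the paper is how the rank-two perturbation is inverted. The paper applies a single Woodbury identity with $\mathbf U=[\sqrt{p_{c,2}}\,\mathbf h_c,\ \sqrt{q_s}\,\mathbf h_s]$. It then inverts the $2\times 2$ matrix $\mathbf I_2+\mathbf U^H\mathbf U$ explicitly and evaluates the quadratic form, including the cross terms in $\mathbf h_c^H\mathbf h_s$. You instead remove the echo and the second stream with two successive rank-one Sherman--Morrison updates.

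The paper's version treats both interferers symmetrically. It also exposes the denominator directly as $\Delta=\det(\mathbf I_2+\mathbf U^H\mathbf U)=\det\mathbf R_1$.

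Your version avoids the $2\times 2$ inverse and the cross-term bookkeeping. It also produces an interpretable intermediate quantity: the echo-whitened channel gain $a$, which does not depend on $\alpha$. The resulting form $R_{c,1}=\log_2\frac{1+p_c a}{1+\alpha p_c a}$ makes several of the paper's later remarks immediate:
\begin{itemize}
\item $R_{c,1}=0$ at $\alpha=1$, and $R_{c,1}$ decreases monotonically in $\alpha$.
\item The $\rho=0$ reduction \eqref{eq:R1_rho0} follows directly.
\item For $\rho<1$ in Corollary~\ref{cor:R1_highsnr}, $a$ stays bounded away from zero under common scaling, tending to $\norm{\mathbf h_c}^2(1-\rho)$, so the ratio tends to $1/\alpha$.
\item For $\rho=1$ in the same corollary, $p_c a$ tends to a finite constant, which reproduces \eqref{eq:R1_highsnr_rho1}.
\end{itemize}
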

\begin{proof}
The derivation follows from a rank-two Woodbury expansion of $\mathbf{R}_1^{-1}(\mathbf{w},\alpha)$ and is provided in Appendix~\ref{app:first_stream_rate}. 
\end{proof}
\begin{figure*}[!htbp]
\vspace{-0.6em}
\begin{equation}
\label{eq:R1_closed_form}
R_{c,1}(\mathbf{w},\alpha)
=
\log_2\!\left(
1+
p_{c,1}
\frac{
\norm{\mathbf{h}_c}^2
\left[
1+
p_s\alpha_s\abs{\mathbf{h}_s^T\mathbf{w}}^2
\norm{\mathbf{h}_s}^2(1-\rho)
\right]
}{
1+
p_{c,2}\norm{\mathbf{h}_c}^2
+
p_s\alpha_s\abs{\mathbf{h}_s^T\mathbf{w}}^2\norm{\mathbf{h}_s}^2
+
p_{c,2}p_s\alpha_s\abs{\mathbf{h}_s^T\mathbf{w}}^2
\norm{\mathbf{h}_c}^2\norm{\mathbf{h}_s}^2(1-\rho)
}
\right).
\end{equation}
\hrulefill
\vspace{-1.2em}
\end{figure*}

\par In particular, when $\alpha=0$, the \gls{rs}-inspired formulation reduces to the \gls{stc} uplink \gls{sic} strategy. Proposition~\ref{prop:R1_closed} admits several useful interpretations. When \(\rho=0\), the \gls{sandc} channels are orthogonal, and \eqref{eq:R1_closed_form} reduces to
\begin{equation}
\label{eq:R1_rho0}
R_{c,1}(\mathbf{w},\alpha)
=
\log_2\!\left(
\frac{
1+p_c\norm{\mathbf{h}_c}^2
}{
1+\alpha p_c\norm{\mathbf{h}_c}^2
}
\right),
\end{equation}
which is independent of the sensing beamformer. At the opposite extreme,
when $\rho=1$, the \gls{sandc} channels are fully aligned, resulting in
the strongest coupling of the sensing echo into the communication
direction. More generally, for $\rho>0$, increasing
$\abs{\mathbf{h}_s^T\mathbf{w}}^2$ strengthens the sensing interference
seen during first-stream decoding and reduces $R_{c,1}(\mathbf{w},\alpha)$.
Finally, increasing $\alpha$ transfers communication power from the first
stream to the second and monotonically decreases $R_{c,1}(\mathbf{w},\alpha)$,
with $\alpha=1$ yielding $R_{c,1}=0$.

\subsection{\gls{sr} After \gls{sic} of the First Stream}
Assuming perfect \gls{sic} of the first communication stream
$\mathbf{s}_{c,1}$~\cite{Zhao2024NearFieldISAC,Mishra2025RSInspiredISAC},
the residual observation at the \gls{bs} is given by
\begin{equation}\label{eq:Y_Sensing_Signal}
\mathbf{Y}_s
=
\sqrt{p_{c,2}}\,\mathbf{h}_c\mathbf{s}_{c,2}^H
+
\sqrt{p_s}\,\beta\,\mathbf{h}_s\mathbf{h}_s^T\mathbf{w}\mathbf{s}_s^H
+
\mathbf{N},
\end{equation}
where $\mathbf{s}_{c,2}$ is treated as interference during
the sensing stage. The \gls{sr} is defined as the sensing \gls{mi} per
symbol, conditioned on the sensing waveform and beamformer
~\cite{Ouyang_MI,TangSensingMI}. Vectorizing
\eqref{eq:Y_Sensing_Signal} gives
\begin{equation}\label{eq:Vec_Y_Sensing_Signal}
\mathrm{vec}(\mathbf{Y}_s)
=
\sqrt{p_s}\,(\mathbf{h}_s^T\mathbf{w})\,
(\mathbf{s}_s^*\otimes\mathbf{h}_s)\beta
+
\mathbf{z}_s,
\end{equation}
where $\mathbf{z}_s\in\mathbb{C}^{NL\times 1}$ denotes the aggregate
interference-plus-noise term with covariance
\begin{equation}
\mathbf{R}_s(\alpha)
=
\mathbf{I}_L\otimes
\bigl(
\mathbf{I}_N+p_{c,2}\mathbf{h}_c\mathbf{h}_c^H
\bigr).
\end{equation}
Accordingly, the achievable \gls{sr} is given by \eqref{eq:Rs_general}.
\begin{figure*}[!t]
\vspace{-0.01em}
\begin{equation}
\label{eq:Rs_general}
R_s(\mathbf{w},\alpha)
=
\frac{1}{L}\log_2\!\left(
1+p_sL\alpha_s\abs{\mathbf{h}_s^T\mathbf{w}}^2
\norm{\mathbf{h}_s}^2
\left[
1-\frac{p_{c,2}\rho\norm{\mathbf{h}_c}^2}
{1+p_{c,2}\norm{\mathbf{h}_c}^2}
\right]
\right).
\end{equation}
\hrulefill
\vspace{-1.2em}
\end{figure*}
\begin{anup}
It follows from \eqref{eq:Rs_general} that the \gls{sr} increases
with the sensing illumination gain
$\abs{\mathbf{h}_s^T\mathbf{w}}^2$ and the channel strength
$\norm{\mathbf{h}_s}^2$, while it is degraded by the undecoded second
communication stream through the overlap factor $\rho$. In particular,
when $\rho$ is small, the corresponding interference penalty is weak.

\par Following the sensing stage, the \gls{bs} forms the \gls{mmse}
estimate $\hat{\beta}$ of the target-response coefficient $\beta$.
Let $\widetilde{\beta}\triangleq\beta-\hat{\beta}$ denote the
estimation error. Its variance, derived in
Appendix~\ref{app:mmse_beta_residual}, is
\begin{equation}
\label{eq:beta_mmse}
\sigma_{\beta|\mathbf{Y}_s}^2
=
\alpha_s2^{-LR_s(\mathbf{w},\alpha)}.
\end{equation}
The resulting residual sensing-echo covariance is characterized in the
following proposition.
\begin{proposition}
\label{prop:Ce_closed}
Under \gls{mmse}-based estimation of the target response, the
symbol-level covariance of the residual sensing echo is
\begin{equation}
\label{eq:Ce_maintext}
\mathbf{C}_e(\mathbf{w},\alpha)
=
p_s\,\alpha_s\,2^{-L R_s(\mathbf{w},\alpha)}
\abs{\mathbf{h}_s^T\mathbf{w}}^2
\mathbf{h}_s\mathbf{h}_s^H.
\end{equation}
\end{proposition}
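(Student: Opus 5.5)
The plan is to write the residual echo explicitly, compute its per-symbol covariance conditioned on the sensing observation, and then insert the MMSE error variance \eqref{eq:beta_mmse}. After the sensing stage the receiver reconstructs the echo as $\sqrt{p_s}\,\hat\beta\,\mathbf h_s\mathbf h_s^T\mathbf w\mathbf s_s^H$ and subtracts it from $\mathbf Y_s$. What remains of the sensing contribution is
\[
\mathbf E=\sqrt{p_s}\,\widetilde\beta\,\mathbf h_s\mathbf h_s^T\mathbf w\,\mathbf s_s^H .
\]
Its $\ell$-th column is $\mathbf e_\ell=\sqrt{p_s}\,\widetilde\beta\,(\mathbf h_s^T\mathbf w)\,s_{s,\ell}^*\,\mathbf h_s$. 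This column is the symbol-level residual that enters the second-stream decoding.

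For the covariance, I would treat $\mathbf h_s$, $\mathbf w$ and $\mathbf s_s$ as deterministic, which matches the paper's conditioning on waveform and beamformer. Then
\[
\mathbb E[\mathbf e_\ell\mathbf e_\ell^H]=p_s\,\mathbb E[|\widetilde\beta|^2]\,|\mathbf h_s^T\mathbf w|^2\,|s_{s,\ell}|^2\,\mathbf h_s\mathbf h_s^H .
\]
The unit-modulus constraint $|s_{s,\ell}|^2=1$ makes this the same for every $\ell$, so it is a valid per-symbol covariance. The MMSE estimate also needs zero mean. This follows from orthogonality, $\mathbb E[\widetilde\beta]=0$, because $\beta$ and the interference-plus-noise are zero-mean and jointly Gaussian.

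The key step is to identify $\mathbb E[|\widetilde\beta|^2]$. In the linear Gaussian model \eqref{eq:Vec_Y_Sensing_Signal} with prior $\beta\sim\mathcal{CN}(0,\alpha_s)$, the MMSE is a posterior variance that does not depend on the observed $\mathbf Y_s$. Hence the conditional and unconditional error variances coincide, and both equal $\sigma^2_{\beta|\mathbf Y_s}$.

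To confirm $\sigma^2_{\beta|\mathbf Y_s}=\alpha_s 2^{-LR_s}$ (or to rely on \eqref{eq:beta_mmse}), let $\mathbf a=\sqrt{p_s}(\mathbf h_s^T\mathbf w)(\mathbf s_s^*\otimes\mathbf h_s)$. The posterior variance is
\[
\sigma^2_{\beta|\mathbf Y_s}=\bigl(\alpha_s^{-1}+\mathbf a^H\mathbf R_s^{-1}\mathbf a\bigr)^{-1}=\frac{\alpha_s}{1+\alpha_s\mathbf a^H\mathbf R_s^{-1}\mathbf a}.
\]
The denominator equals $2^{LR_s}$, since $LR_s=\log_2(1+\alpha_s\mathbf a^H\mathbf R_s^{-1}\mathbf a)$. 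Substituting this into the covariance gives \eqref{eq:Ce_maintext}.

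The main subtlety is correlation between symbols. $\widetilde\beta$ is common to all $L$ symbols, so the residual is rank-one correlated across the frame. The proposition claims only the symbol-level (marginal) covariance, which is exactly what the per-symbol rate evaluation uses, as in the first-stream analysis. I would therefore state that cross-symbol dependence is ignored, because the marginal covariance is what enters a Gaussian worst-case, treat-as-noise bound.

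I would also note that $\widetilde\beta$ is uncorrelated with $\mathbf s_{c,2}$ only through the orthogonality principle, since $\hat\beta$ depends on $\mathbf s_{c,2}$. A cross term with the second communication stream therefore does not vanish exactly. The statement concerns only the residual echo's own covariance, so I would flag this point without resolving it here.
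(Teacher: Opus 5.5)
Your proposal is correct and follows essentially the same route as the paper's proof in Appendix~\ref{app:mmse_beta_residual}: you write the residual column $\mathbf e_\ell=\sqrt{p_s}\,\widetilde\beta\,(\mathbf h_s^T\mathbf w)s_{s,\ell}^*\mathbf h_s$, use $|s_{s,\ell}|^2=1$, and identify the Gaussian posterior variance $\alpha_s/(1+\alpha_s\mathbf a^H\mathbf R_s^{-1}\mathbf a)=\alpha_s2^{-LR_s}$. The only difference is that you use the information form of the posterior variance, whereas the paper writes out the estimator and applies the matrix inversion lemma; the two are equivalent. One small correction to your closing aside: orthogonality makes $\widetilde\beta$ uncorrelated with the observation $\mathbf Y_s$, not with $\mathbf s_{c,2}$. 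In fact $\widetilde\beta$ is correlated with the second stream whenever $\rho>0$; this is the coupling the paper handles separately in Appendix~\ref{app:second_stream_rate}, and it does not affect this proposition.
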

\begin{proof}
The proof follows from the \gls{mmse} error variance in
\eqref{eq:beta_mmse} and is provided in
Appendix~\ref{app:mmse_beta_residual}.
\end{proof}
Proposition~\ref{prop:Ce_closed} shows that the residual sensing echo
is confined to the spatial direction of $\mathbf h_s$, while its
magnitude is jointly determined by the sensing illumination and the
posterior uncertainty in $\beta$. The factor
$2^{-LR_s(\mathbf w,\alpha)}$ decreases as the sensing observation
becomes more informative, whereas
$\abs{\mathbf h_s^T\mathbf w}^2$ directly scales the echo power.
Consequently, increasing the sensing illumination improves estimation
accuracy but also strengthens the echo to be cancelled, so the residual
covariance increases at a diminishing rate.

\subsection{Second-Stream \gls{cr}}

After the sensing stage, the \gls{bs} subtracts the sensing echo
reconstructed using $\hat{\beta}$ before decoding $\mathbf{s}_{c,2}$. Unlike a decoded communication codeword, whose symbols are exactly
known from the shared codebook, the sensing echo depends on the unknown
coefficient $\beta$ and can only be reconstructed from its estimate,
thereby leaving residual interference.
The resulting achievable rate is characterized as follows.

\begin{proposition}
\label{prop:R2_closed}
Under \gls{mmse}-based sensing-echo cancellation, the achievable
symbol-level rate of the second communication stream is given by
\eqref{eq:R2_closed_form}.
\begin{figure*}[!t]
\vspace{-0.01em}
\begin{equation}
\label{eq:R2_closed_form}
R_{c,2}(\mathbf w,\alpha)
=
\log_2\!\left(
1+p_{c,2}\norm{\mathbf h_c}^2
\right)
-
\log_2\!\left(
1+
\frac{
p_{c,2}\norm{\mathbf h_c}^2
}{
1+p_{c,2}\norm{\mathbf h_c}^2
}
p_s\alpha_s2^{-LR_s(\mathbf w,\alpha)}
\abs{\mathbf h_s^T\mathbf w}^2
\norm{\mathbf h_s}^2\rho
\right).
\end{equation}
\hrulefill
\vspace{-1.2em}
\end{figure*}
\end{proposition}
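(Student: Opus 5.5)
After both $\mathbf s_{c,1}$ and the reconstructed echo $\sqrt{p_s}\,\hat\beta\,\mathbf h_s\mathbf h_s^T\mathbf w\mathbf s_s^H$ have been subtracted, the residual observation is
\begin{equation*}
\mathbf Y_c=\sqrt{p_{c,2}}\,\mathbf h_c\mathbf s_{c,2}^H+\sqrt{p_s}\,\widetilde\beta\,\mathbf h_s\mathbf h_s^T\mathbf w\mathbf s_s^H+\mathbf N .
\end{equation*}
The plan is to treat the residual echo as Gaussian interference. This is a worst-case-noise argument: for a fixed second-order structure, Gaussian interference minimizes mutual information. It is justified because $\widetilde\beta$ is uncorrelated with the MMSE estimate and, being a Gaussian posterior error, is independent of the observation used to form $\hat\beta$.

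On a per-symbol basis, the interference has covariance given by Proposition~\ref{prop:Ce_closed}, using $\abs{s_{s,\ell}}^2=1$. The interference-plus-noise covariance for decoding $\mathbf s_{c,2}$ is therefore
\begin{equation*}
\mathbf R_2=\mathbf I_N+\mathbf C_e(\mathbf w,\alpha)=\mathbf I_N+\gamma\,\mathbf h_s\mathbf h_s^H,\qquad \gamma\triangleq p_s\alpha_s2^{-LR_s(\mathbf w,\alpha)}\abs{\mathbf h_s^T\mathbf w}^2 .
\end{equation*}
As in \eqref{eq:R1_scalar}, the matrix determinant lemma gives
\begin{equation*}
R_{c,2}=\log_2\bigl(1+p_{c,2}\mathbf h_c^H\mathbf R_2^{-1}\mathbf h_c\bigr).
\end{equation*}

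Next, apply the rank-one Sherman--Morrison formula:
\begin{equation*}
\mathbf R_2^{-1}=\mathbf I_N-\frac{\gamma\,\mathbf h_s\mathbf h_s^H}{1+\gamma\norm{\mathbf h_s}^2}.
\end{equation*}
Together with \eqref{eq:rho}, this yields
\begin{equation*}
\mathbf h_c^H\mathbf R_2^{-1}\mathbf h_c=\norm{\mathbf h_c}^2\,\frac{1+\gamma\norm{\mathbf h_s}^2(1-\rho)}{1+\gamma\norm{\mathbf h_s}^2}.
\end{equation*}
Write $a=p_{c,2}\norm{\mathbf h_c}^2$ and $g=\gamma\norm{\mathbf h_s}^2$. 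Then
\begin{equation*}
1+p_{c,2}\mathbf h_c^H\mathbf R_2^{-1}\mathbf h_c=\frac{(1+a)(1+g)-ag\rho}{1+g}.
\end{equation*}

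The final step is to rearrange this into the difference of logarithms in \eqref{eq:R2_closed_form}. The obstacle is that the expression above carries a denominator $1+g$ that \eqref{eq:R2_closed_form} lacks. One way the target could arise is if the residual echo were not whitened but treated with a matched, directional-projection interpretation, so the denominator never appears. To check this, I would expand $\log_2\bigl((1+a)(1+g)-ag\rho\bigr)-\log_2(1+g)$ and compare it with $\log_2(1+a)-\log_2\bigl(1+\tfrac{a}{1+a}\,g\rho\bigr)$. The two expressions differ unless the stated formula implicitly assumes the weak-residual regime ($g\ll1$, first order) or a specific accounting of the residual-echo term in Appendix~\ref{app:mmse_beta_residual}.

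I therefore expect this reconciliation step to be the main obstacle. My first attempt would be to reproduce the author's per-symbol model exactly. Possibly the residual is treated as a rank-one error whose power enters only through its projection onto $\mathbf h_c$ after MMSE combining for $\mathbf s_{c,2}$. In that case the rate becomes $\log_2\bigl(1+a\bigr)-\log_2\bigl(1+\tfrac{a}{1+a}g\rho\bigr)$, i.e., the combiner's output SINR loss normalized by the interference-free SNR. I would verify this form against the limits $\rho=0$, which gives no loss, and $\gamma\to0$, which gives full cancellation.
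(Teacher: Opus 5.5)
There is a genuine gap, and it sits at the start of your argument rather than in the final ``reconciliation'' step. You justify treating the residual echo as independent Gaussian interference because $\widetilde\beta$ is independent of $\mathrm{vec}(\mathbf Y_s)$. That is true, but it does not make $\widetilde\beta$ independent of the individual components of that observation. $\hat\beta$ is a linear function of data containing $\mathbf s_{c,2}$ and $\mathbf N$. Hence $\widetilde\beta$ is correlated with the symbol being decoded, $x_{2,\ell}=s_{c,2,\ell}^*$, and with the noise $\mathbf n_\ell$ that reappears in the post-cancellation observation. So $\mathbf I_N+\mathbf C_e$ is not the relevant interference-plus-noise covariance. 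The worst-case-noise argument also does not apply, since it requires interference independent of the input.

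Your formula is therefore not just a rearrangement of the target. With your $a=P_2$ and $g=Q_e$, you get $1+a-\frac{ag\rho}{1+g}$ inside the logarithm. The target has $\frac{(1+a)^2}{1+a+ag\rho}=1+a-\frac{ag\rho}{1+\frac{a\rho}{1+a}g}$, which is strictly smaller whenever $ag\rho>0$; for example, $a=g=\rho=1$ gives $\log_2\frac{3}{2}$ versus $\log_2\frac{4}{3}$. The two agree only to first order in $g$, so the target is not a weak-residual approximation of your expression. Your independent-interference model actually overstates the rate rather than lower-bounding it. The closing ``directional projection'' idea simply asserts the target without deriving it.

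The missing step is to compute the joint second-order statistics of $(x_{2,\ell},\mathbf y_{2,\ell})$, where $\mathbf y_{2,\ell}=\sqrt{p_{c,2}}\mathbf h_c x_{2,\ell}+\mathbf u\, s_{s,\ell}^*\widetilde\beta+\mathbf n_\ell$, with $\mathbf u=\sqrt{p_s}(\mathbf h_s^T\mathbf w)\mathbf h_s$ and $\mathbf R_0=\mathbf I_N+p_{c,2}\mathbf h_c\mathbf h_c^H$.

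From the explicit estimator \eqref{eq:app_beta_hat_mmse}, $\mathbb E[\widetilde\beta x_{2,\ell}^*]=-\sigma^2_{\beta\mid\mathbf Y_s}\sqrt{p_{c,2}}\,s_{s,\ell}\,\mathbf u^H\mathbf R_0^{-1}\mathbf h_c$. The effective channel is therefore $\mathbf h_{2,\mathrm{eff}}=\sqrt{p_{c,2}}(\mathbf R_0-\mathbf C_e)\mathbf R_0^{-1}\mathbf h_c$: cancelling $\hat\beta$ also removes part of the desired signal along $\mathbf h_s$. The orthogonality principle gives $\mathbb E[\mathbf y_{2,\ell}\mathbf y_{2,\ell}^H]=\mathbf R_0-\mathbf C_e$, so the residual enters with a minus sign.

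Since the pair is jointly Gaussian, $R_{c,2}=-\log_2\bigl(1-\mathbf h_{2,\mathrm{eff}}^H(\mathbf R_0-\mathbf C_e)^{-1}\mathbf h_{2,\mathrm{eff}}\bigr)$. Using $\mathbf R_0^{-1}\mathbf h_c=\mathbf h_c/(1+P_2)$, the quadratic form reduces to $\frac{P_2}{1+P_2}-\frac{P_2Q_e\rho}{(1+P_2)^2}$, which yields \eqref{eq:R2_closed_form} exactly. Your Sherman--Morrison algebra is correct, but it is applied to a channel model that ignores this estimator-induced coupling.
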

\begin{proof}
The proof is provided in Appendix~\ref{app:second_stream_rate}.
\end{proof}

Proposition~\ref{prop:R2_closed} expresses the second-stream
\gls{cr} as the rate attainable under perfect sensing-echo cancellation
minus a penalty due to finite target-response uncertainty. Importantly,
the residual covariance in Proposition~\ref{prop:Ce_closed} alone does
not fully characterize the disturbance seen by the second stream,
since the residual sensing echo is statistically coupled with
$\mathbf s_{c,2}$ through the estimation of $\hat{\beta}$. The rate in
\eqref{eq:R2_closed_form} accounts for this dependence, with the
resulting penalty governed by $2^{-LR_s(\mathbf w,\alpha)}$,
 $\abs{\mathbf h_s^T\mathbf w}^2$, and
overlap $\rho$.
\par In particular, when $\rho=0$, the penalty vanishes and
\eqref{eq:R2_closed_form} reduces to
$\log_2(1+p_{c,2}\norm{\mathbf h_c}^2)$, irrespective of the residual
target-response uncertainty. At the other extreme, $\rho=1$ produces
the strongest coupling between the residual sensing echo and the
communication channel. Increasing $\alpha$ raises the second-stream
power $p_{c,2}$, while also increasing its interference during sensing
and thereby degrading the subsequent sensing-echo cancellation
accuracy. Nevertheless, $R_{c,2}(\mathbf w,\alpha)$ increases
monotonically with $\alpha$, while $R_s(\mathbf w,\alpha)$ decreases
for $\rho>0$, exposing the corresponding \gls{sandc}
trade-off.

\par For comparison, perfect post-sensing cancellation, as adopted in
existing uplink \gls{isac} analyses
\cite{Zhao2024NearFieldISAC,OuyangImpactrevealSIC}, yields
\begin{equation}
\label{eq:R2_ideal}
R_{c,2}^{\mathrm{ideal}}(\alpha)
=
\log_2\!\left(
1+p_{c,2}\norm{\mathbf h_c}^2
\right).
\end{equation}
For any finite \gls{sr}, however,
\eqref{eq:beta_mmse} gives a non-zero posterior uncertainty in
$\beta$. Thus, \eqref{eq:R2_ideal} serves as a
perfect-cancellation benchmark, while
\eqref{eq:R2_closed_form} captures the communication penalty induced
by finite sensing accuracy.
\end{anup}

\subsection{Asymptotic Analysis}\label{subsubsec:asymp_ana}
In this subsection, we investigate the asymptotic behaviour of the \gls{rs}-inspired uplink \gls{isac} scheme in the high-\gls{snr} regime and, for the near-field specialization, in the large-array regime.

\subsubsection{High-\gls{snr} Regime}\label{subsubsec:highsnr}
We first examine the high-\gls{snr} behaviour of the
\gls{rs}-inspired uplink \gls{isac} scheme under a common power-scaling law, in which the \gls{sandc} powers are parameterized as
$p_c=\bar{p}_c\,p$ and $p_s=\bar{p}_s\,p$, respectively, with
$p\to\infty$. Here, $\bar{p}_c,\bar{p}_s>0$ are fixed finite scaling
coefficients that determine the relative power levels of the two
functionalities, while the split factor $\alpha\in(0,1)$ and a sensing beamformer
$\mathbf w$ satisfying $\abs{\mathbf h_s^T\mathbf w}^2>0$ are kept fixed. Under this regime, the powers
allocated to the first and second communication streams are
$p_{c,1}=(1-\alpha)\bar{p}_c\,p$ and
$p_{c,2}=\alpha\bar{p}_c\,p$, respectively.
\par To begin with, we characterize the asymptotic behaviour of the first communication stream through the following corollary.
\begin{corollary}
\label{cor:R1_highsnr}
Under the common power-scaling law, the first-stream \gls{cr},
$R_{c,1}(\mathbf{w},\alpha)$, converges to a finite constant as
$p\to\infty$. In particular, for $\rho<1$,
\begin{equation}
\label{eq:R1_highsnr}
R_{c,1}(\mathbf{w},\alpha)
=
\log_2\!\left(\frac{1}{\alpha}\right)+o(1),
\; p\to\infty,
\end{equation}
and therefore has zero high-\gls{snr} slope. For the fully aligned case
$\rho=1$, the rate also saturates, with the limiting expression given by
\eqref{eq:R1_highsnr_rho1}.
\vspace{-0.2em}
\end{corollary}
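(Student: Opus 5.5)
We start from the closed-form expression \eqref{eq:R1_closed_form} in Proposition~\ref{prop:R1_closed}. Substituting $p_{c,1}=(1-\alpha)\bar p_c p$, $p_{c,2}=\alpha\bar p_c p$, and $p_s=\bar p_s p$ turns the SINR argument into a ratio of polynomials in $p$. To streamline the bookkeeping, we introduce the fixed constants $a\triangleq\bar p_c\norm{\mathbf h_c}^2$ and $b\triangleq\bar p_s\alpha_s\abs{\mathbf h_s^T\mathbf w}^2\norm{\mathbf h_s}^2$. The hypothesis $\abs{\mathbf h_s^T\mathbf w}^2>0$ gives $b>0$, and $a>0$ for any nonzero $\mathbf h_c$.

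The SINR term then reads
\begin{equation*}
\gamma(p)=\frac{(1-\alpha)a p\,[1+b p(1-\rho)]}{1+\alpha a p+b p+\alpha a b p^2(1-\rho)}.
\end{equation*}
For $\rho<1$, both numerator and denominator are quadratic in $p$, with leading coefficients $(1-\alpha)ab(1-\rho)$ and $\alpha ab(1-\rho)$, respectively. Dividing through by $p^2$ therefore yields
\begin{equation*}
\gamma(p)\to\frac{1-\alpha}{\alpha}.
\end{equation*}
Because $\alpha\in(0,1)$ is fixed, this limit is finite and positive. By continuity of $\log_2(1+\cdot)$, we obtain $R_{c,1}\to\log_2(1+(1-\alpha)/\alpha)=\log_2(1/\alpha)$, with an $o(1)$ error. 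This establishes \eqref{eq:R1_highsnr}. Since $R_{c,1}$ stays bounded, dividing by $\log_2 p$ gives a high-\gls{snr} slope of $\lim R_{c,1}/\log_2 p = 0$.

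For $\rho=1$, the $(1-\rho)$ terms vanish and $\gamma(p)=(1-\alpha)ap/(1+\alpha ap+bp)$. This is a ratio of two linear functions of $p$, so it converges to $(1-\alpha)a/(\alpha a+b)$. The saturating rate is then
\begin{equation*}
\log_2\!\left(1+\frac{(1-\alpha)\bar p_c\norm{\mathbf h_c}^2}{\alpha\bar p_c\norm{\mathbf h_c}^2+\bar p_s\alpha_s\abs{\mathbf h_s^T\mathbf w}^2\norm{\mathbf h_s}^2}\right),
\end{equation*}
which we identify with \eqref{eq:R1_highsnr_rho1}. Again the slope is zero.

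There is no genuine obstacle; the one point needing care is the leading-order comparison. The case split at $\rho=1$ is essential, because the quadratic coefficients degenerate there and the limit changes from one independent of the sensing parameters to one that retains the sensing interference term $b$. We may also note that the $o(1)$ remainder is in fact $O(1/p)$, since $\gamma$ is a rational function whose numerator and denominator share the same degree.
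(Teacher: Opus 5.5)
Your proposal is correct and follows the paper's own argument. Both substitute the common power-scaling law into \eqref{eq:R1_closed_form}, then match the leading $(1-\rho)p^2$ coefficients for $\rho<1$ and the linear-order terms for $\rho=1$. Your $\rho=1$ limit coincides with \eqref{eq:R1_highsnr_rho1} after dividing numerator and denominator by $\bar p_c$.
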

\begin{proof}
\renewcommand{\qedsymbol}{}
Substituting the common power-scaling law into
\eqref{eq:R1_closed_form} and retaining the dominant terms as
$p\to\infty$ gives the result. For $\rho<1$, the terms proportional to
$(1-\rho)$ dominate, yielding \eqref{eq:R1_highsnr}. On the other hand, for $\rho=1$,
these terms vanish and the resulting dominant-order balance yields
\eqref{eq:R1_highsnr_rho1}.
\vspace{-0.5em}
\end{proof}
\begin{figure*}[!t]
\begin{equation}
\label{eq:R1_highsnr_rho1}
R_{c,1}(\mathbf{w},\alpha)
=
\log_2\!\Bigg(
1+
\frac{(1-\alpha)\norm{\mathbf{h}_c}^2}{
\alpha\norm{\mathbf{h}_c}^2
+\dfrac{\bar{p}_s}{\bar{p}_c}\alpha_s
\abs{\mathbf{h}_s^T\mathbf{w}}^2\norm{\mathbf{h}_s}^2
}
\Bigg)+o(1),
\; \rho=1,\; p\to\infty.
\end{equation}
\hrulefill
\vspace{-0.2em}
\end{figure*}
Corollary~\ref{cor:R1_highsnr} shows that, irrespective of the channel overlap, the first communication stream saturates at high \gls{snr} and therefore contributes no asymptotic rate slope. We next characterize the asymptotic behaviour of the \gls{sr}. The corresponding high-\gls{snr} result is stated in the following corollary.
\begin{corollary}
\label{cor:Rs_highsnr}
Under the common power-scaling law, the \gls{sr}
$R_s(\mathbf{w},\alpha)$ grows logarithmically with $p$ for $\rho<1$
according to \eqref{eq:Rs_highsnr_slope},
\begin{figure*}[!t]
\begin{equation}
\label{eq:Rs_highsnr_slope}
R_s(\mathbf{w},\alpha)
=
\frac{1}{L}\log_2 p
+
\frac{1}{L}\log_2\!\Big(
\bar{p}_sL\alpha_s\abs{\mathbf{h}_s^T\mathbf{w}}^2
\norm{\mathbf{h}_s}^2(1-\rho)
\Big)
+o(1),
\; \rho<1,\; p\to\infty,
\end{equation}
\vspace{-1.2em}
\end{figure*}
and therefore has high-\gls{snr} slope $1/L$. By contrast, for the
fully aligned case $\rho=1$, the \gls{sr} saturates, with the limiting
expression given by \eqref{eq:Rs_highsnr_rho1}.
\begin{figure*}[!t]
\begin{equation}
\label{eq:Rs_highsnr_rho1}
R_s(\mathbf{w},\alpha)
=
\frac{1}{L}\log_2\!\left(
1+\frac{\bar{p}_sL\alpha_s\abs{\mathbf{h}_s^T\mathbf{w}}^2
\norm{\mathbf{h}_s}^2}
{\alpha\bar{p}_c\norm{\mathbf{h}_c}^2}
\right)
+o(1),
\; \rho=1,\; p\to\infty.
\end{equation}
\hrulefill
\vspace{-1.2em}
\end{figure*}
\end{corollary}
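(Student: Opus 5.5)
The plan is to substitute the common power-scaling law directly into the closed-form \gls{sr} expression \eqref{eq:Rs_general}, mirroring the proof of Corollary~\ref{cor:R1_highsnr}. I would set $p_s=\bar p_s p$ and $p_{c,2}=\alpha\bar p_c p$, and introduce the shorthand $a\triangleq \bar p_s L\alpha_s\abs{\mathbf h_s^T\mathbf w}^2\norm{\mathbf h_s}^2>0$ and $b\triangleq\alpha\bar p_c\norm{\mathbf h_c}^2>0$. With these, the bracketed interference factor becomes $1-\rho\, bp/(1+bp)$. I would rewrite it in the exact form
\begin{equation*}
1-\frac{\rho\, bp}{1+bp}=\frac{1+(1-\rho)bp}{1+bp}.
\end{equation*}
The argument of the logarithm is then exactly $1+ap\,\frac{1+(1-\rho)bp}{1+bp}$, and the case split on $\rho$ comes from this single expression.

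For $\rho<1$, the fraction converges to $1-\rho>0$ as $p\to\infty$. I would write it as $(1-\rho)+\frac{\rho}{1+bp}$, so the argument equals $ap(1-\rho)\bigl[1+O(1/p)\bigr]$. Taking $\frac1L\log_2$ and using $\log_2(1+x)=O(x)$ then gives $\frac1L\log_2 p+\frac1L\log_2\bigl(a(1-\rho)\bigr)+o(1)$, which is \eqref{eq:Rs_highsnr_slope}. The slope $\lim_{p\to\infty} R_s/\log_2 p=1/L$ follows immediately. The only care needed is that $a(1-\rho)>0$, so that the offset term is finite; this is guaranteed by $\abs{\mathbf h_s^T\mathbf w}^2>0$, $\norm{\mathbf h_s}^2>0$, and $\rho<1$.

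For $\rho=1$, the fraction is exactly $1/(1+bp)$. The argument becomes $1+\frac{ap}{1+bp}$, which converges to $1+a/b$ by continuity. Continuity of $\log_2$ then gives \eqref{eq:Rs_highsnr_rho1} with $a/b=\bar p_sL\alpha_s\abs{\mathbf h_s^T\mathbf w}^2\norm{\mathbf h_s}^2/(\alpha\bar p_c\norm{\mathbf h_c}^2)$, which is well defined because $\alpha\in(0,1)$ is fixed.

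I expect no serious obstacle. The step to handle carefully is the $\rho<1$ expansion: a naive argument that only keeps dominant terms could mishandle the $\rho/(1+bp)$ correction, so I would keep the exact factorization above and bound that correction explicitly. I would also note that $\alpha>0$ is essential. If $\alpha=0$, the undecoded interference vanishes, $b=0$, and the $\rho=1$ saturation fails; this is why the corollary assumes a fixed $\alpha\in(0,1)$.
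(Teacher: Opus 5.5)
Your proposal is correct and follows the same route as the paper: you substitute the scaling law into \eqref{eq:Rs_general} and observe that the bracketed factor tends to $1-\rho$ for $\rho<1$ and decays like $1/p$ for $\rho=1$, and your exact rewriting $\frac{1+(1-\rho)bp}{1+bp}$ simply makes the paper's ``retain the dominant terms'' step explicit. One small addition to your closing remark: $b>0$ (i.e., $\alpha>0$) is also what produces the factor $(1-\rho)$ in the $\rho<1$ offset, since for $\alpha=0$ the bracket equals $1$ and the offset would instead be $\frac{1}{L}\log_2 a$.
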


\begin{proof}
\renewcommand{\qedsymbol}{}
Substituting the common power-scaling law into \eqref{eq:Rs_general}
and retaining the dominant terms as $p\to\infty$ gives the result. For $\rho<1$, the bracketed term in \eqref{eq:Rs_general} converges to $1-\rho$, yielding \eqref{eq:Rs_highsnr_slope}. By contrast, for $\rho=1$, the same
term decays as $1/p$, cancelling the outer power-scaling factor and yielding the finite limit in \eqref{eq:Rs_highsnr_rho1}.
\end{proof}
\par Corollary~\ref{cor:Rs_highsnr} therefore shows that the \gls{sr} has high-\gls{snr} slope $1/L$ whenever the \gls{sandc} channels are not fully aligned, whereas it saturates for $\rho=1$. \AM{Next, we characterize the asymptotic behaviour of the second communication stream.}
\begin{anup}
\begin{corollary}
\label{cor:R2_highsnr}
Under the common power-scaling law, the second-stream \gls{cr},
$R_{c,2}(\mathbf{w},\alpha)$, grows logarithmically with $p$ for
$\rho<1$ according to \eqref{eq:R2_highsnr_slope}, and therefore has
unit high-\gls{snr} slope. By contrast, for $\rho=1$, the second-stream
rate saturates, with the limiting expression given by
\eqref{eq:R2_highsnr_rho1}.
\end{corollary}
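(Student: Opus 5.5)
We substitute the common power-scaling law into the closed form \eqref{eq:R2_closed_form} of Proposition~\ref{prop:R2_closed} and treat its two logarithmic terms separately, using Corollary~\ref{cor:Rs_highsnr} for the factor $2^{-LR_s(\mathbf w,\alpha)}$. For brevity we write $a\triangleq\alpha\bar p_c\norm{\mathbf h_c}^2$ and $b\triangleq\bar p_s\alpha_s\abs{\mathbf h_s^T\mathbf w}^2\norm{\mathbf h_s}^2>0$, so that $p_{c,2}\norm{\mathbf h_c}^2=ap$ and $p_s\alpha_s\abs{\mathbf h_s^T\mathbf w}^2\norm{\mathbf h_s}^2=bp$. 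The first term is then $\log_2(1+ap)=\log_2 p+\log_2 a+o(1)$, which gives the unit slope. The whole question is therefore whether the penalty term stays bounded.

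From \eqref{eq:Rs_general} we have exactly
\begin{equation*}
2^{-LR_s}=\Bigl(1+Lbp\bigl[1-\tfrac{ap\rho}{1+ap}\bigr]\Bigr)^{-1}.
\end{equation*}
The argument of the penalty logarithm is therefore
\begin{equation*}
1+\frac{ap}{1+ap}\cdot\frac{bp\rho}{1+Lbp\,\frac{1+ap(1-\rho)}{1+ap}}.
\end{equation*}
We take the two cases in turn.

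For $\rho<1$, the bracket tends to $1-\rho$, so the second factor tends to $\rho/(L(1-\rho))$. The penalty then converges to $\log_2\!\bigl(1+\rho/(L(1-\rho))\bigr)$, which is a finite constant. This yields
\begin{equation*}
R_{c,2}=\log_2 p+\log_2\!\bigl(\alpha\bar p_c\norm{\mathbf h_c}^2\bigr)-\log_2\!\Bigl(1+\tfrac{\rho}{L(1-\rho)}\Bigr)+o(1),
\end{equation*}
which is the form we expect \eqref{eq:R2_highsnr_slope} to take. In this case the residual sensing echo affects only the offset.

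For $\rho=1$, the bracket equals $1/(1+ap)$, and the denominator becomes $1+Lbp/(1+ap)\to 1+Lb/a$. The penalty argument then grows like $bp/(1+Lb/a)$, so the penalty is $\log_2 p+\log_2\!\bigl(ab/(a+Lb)\bigr)+o(1)$. The $\log_2 p$ terms cancel, leaving the finite limit
\begin{equation*}
R_{c,2}\to\log_2 a-\log_2\!\frac{ab}{a+Lb}=\log_2\!\Bigl(L+\frac{\alpha\bar p_c\norm{\mathbf h_c}^2}{\bar p_s\alpha_s\abs{\mathbf h_s^T\mathbf w}^2\norm{\mathbf h_s}^2}\Bigr).
\end{equation*}
We would present this as \eqref{eq:R2_highsnr_rho1}.

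The main obstacle is the $\rho=1$ case. There both logarithms diverge at the same rate, so we must keep the exact $1/(1+ap)$ structure of the bracket rather than its limit, and the leading-order cancellation has to be done carefully to obtain the correct constant. We would verify this by combining the two logarithms into a single ratio before taking $p\to\infty$.
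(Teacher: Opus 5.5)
Your proposal is correct and follows the paper's own route. You substitute the common power-scaling law into \eqref{eq:R2_closed_form} and show that the residual penalty converges to $\log_2\!\bigl(1+\rho/(L(1-\rho))\bigr)$ for $\rho<1$, while for $\rho=1$ it grows as $\log_2 p+\log_2\!\bigl(ab/(a+Lb)\bigr)$ and cancels the first term; your explicit algebra (working from the exact form of $2^{-LR_s}$ rather than its asymptotic expansion) correctly reproduces both \eqref{eq:R2_highsnr_slope} and \eqref{eq:R2_highsnr_rho1} and fills in the steps the paper's terse proof leaves implicit.
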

\begin{proof}
\renewcommand{\qedsymbol}{}
The result follows by substituting the common power-scaling law into
\eqref{eq:R2_closed_form} and using the high-\gls{snr}
behaviour of $R_s(\mathbf w,\alpha)$ from
Corollary~\ref{cor:Rs_highsnr}. For $\rho<1$, the residual-dependent
term converges to a finite constant, yielding unit slope, whereas for
$\rho=1$ it scales with $p$ and cancels the logarithmic growth of the
first term, yielding the finite limit in
\eqref{eq:R2_highsnr_rho1}.
\end{proof}

\begin{figure*}[!t]
\vspace{0.2em}
\begin{equation}
\label{eq:R2_highsnr_slope}
R_{c,2}(\mathbf{w},\alpha)
=
\log_2 p
+
\log_2\!\left(
\alpha\bar{p}_c\norm{\mathbf{h}_c}^2
\frac{L(1-\rho)}
{L(1-\rho)+\rho}
\right)
+o(1),
\;\rho<1,\;p\to\infty.
\end{equation}
\vspace{-1.2em}
\end{figure*}
\begin{figure*}[!t]
\begin{equation}
\label{eq:R2_highsnr_rho1}
R_{c,2}(\mathbf{w},\alpha)
=
\log_2\!\left(
L+
\frac{\alpha\bar{p}_c\norm{\mathbf{h}_c}^2}
{\bar{p}_s\alpha_s
\abs{\mathbf{h}_s^T\mathbf{w}}^2\norm{\mathbf{h}_s}^2}
\right)
+o(1),
\;\rho=1,\;p\to\infty.
\end{equation}
\hrulefill
\vspace{-1.2em}
\end{figure*}
\par Corollary~\ref{cor:R2_highsnr} shows that the second communication
stream has unit high-\gls{snr} slope whenever the \gls{sandc} channels
are not fully aligned, whereas it saturates for $\rho=1$.
Together with Corollary~\ref{cor:R1_highsnr}, this implies that, for
$\rho<1$, the aggregate \gls{cr}
$R_c(\mathbf w,\alpha)=R_{c,1}(\mathbf w,\alpha)
+R_{c,2}(\mathbf w,\alpha)$ satisfies
\begin{equation}
\label{eq:Rc_highsnr}
R_c(\mathbf w,\alpha)
=
\log_2 p+O(1),
\qquad
\rho<1,\;p\to\infty,
\end{equation}
and therefore has unit high-\gls{snr} slope. Combining this with
Corollary~\ref{cor:Rs_highsnr}, for $\rho<1$ the
\gls{rs}-inspired uplink scheme achieves \gls{sr} and aggregate
\gls{cr} slopes of $1/L$ and $1$, respectively, whereas for $\rho=1$ both rates saturate. These conclusions apply to both near- and
far-field propagation, with the regime affecting the rate
offsets rather than the asymptotic slopes.
\par For comparison, we evaluate the high-\gls{snr} slopes of the two
endpoint uplink \gls{sic} strategies and uplink \gls{fdsac} under the
same common power-scaling law. As summarized in
Table~\ref{tab:uplink_slopes_rs}, for $\rho<1$, both endpoint
\gls{sic} strategies and the \gls{rs}-inspired scheme achieve the same
asymptotic slope pair, namely an \gls{sr} slope of $1/L$ and a
\gls{cr} slope of one. By contrast, uplink \gls{fdsac} achieves slopes
$\kappa/L$ and $1-\kappa$ for \gls{sandc}, respectively, where
$\kappa\in[0,1]$ denotes the fraction of bandwidth allocated to sensing.
The reduction follows from the explicit partition of bandwidth between
the two functionalities.
\end{anup}
\begin{table}[!h]
\vspace{-0.6cm}
\centering
\caption{High-\gls{snr} slopes, $\rho < 1$.}
\label{tab:uplink_slopes_rs}
\renewcommand{\arraystretch}{1.15}
\setlength{\tabcolsep}{5pt}
\begin{tabular}{|c|c|c|}
\hline
Scheme & \gls{sr} Slope & \gls{cr} Slope \\
\hline
S-C & ${1}/{L}$ & $1$ \\
\hline
C-C & ${1}/{L}$ & $1$ \\
\hline
\gls{rs}-inspired & ${1}/{L}$ & $1$ \\
\hline
\gls{fdsac} & $\kappa/L$ & $1-\kappa$ \\
\hline
\end{tabular}\vspace{-1cm}
\end{table}
\begin{remark}
\label{rem:residual_covariance_highsnr}
For $\rho<1$, \gls{mmse}-based sensing-echo cancellation preserves the
leading high-\gls{snr} slopes while changing the rate offsets and,
consequently, the attainable \gls{sandc} trade-off. For
$\rho=1$, the distinction is more fundamental: the second-stream
\gls{cr} saturates, whereas under perfect post-sensing cancellation it
retains unit high-\gls{snr} slope. Thus, perfect sensing cancellation
can alter even the asymptotic communication behaviour when the overlap factor $\rho=1$.
\end{remark}

\subsubsection{Large-Array Regime}\label{subsubsec:largearray}
The large-array behaviour of the \gls{rs}-inspired uplink \gls{isac}
scheme is next examined under the near-field channel specialization by
letting $N_y,N_z\to\infty$. Following the large-array treatment
in~\cite{Zhao2024NearFieldISAC}, the near-field channel norms admit the
geometric representation
\begin{equation}
\label{eq:geom_norm_generic}
\norm{\mathbf{h}_i}^2
=
\frac{\zeta}{4\pi}
\sum_{y\in \mathcal{Y}_i}\sum_{z\in \mathcal{Z}_i}\delta_i(y,z),
\; i\in\{c,s\},
\end{equation}
where $\mathcal{Y}_i$ and $\mathcal{Z}_i$ denote the normalized
aperture-coordinate sets induced by the geometry of node $i$, and
$\delta_i(y,z)$ is the corresponding aperture kernel defined
in~\cite{Zhao2024NearFieldISAC}. As the array aperture grows, the channel
norms converge to
\begin{equation}
\label{eq:norm_limit}
\lim_{N_y,N_z\to\infty}\norm{\mathbf{h}_c}^2
=
\lim_{N_y,N_z\to\infty}\norm{\mathbf{h}_s}^2
=
\frac{\zeta}{3}.
\end{equation}
Moreover, for a normalized sensing-beamformer sequence for which the
effective sensing gain converges, define
\begin{equation}
D(\mathbf w)\triangleq
\lim_{N_y,N_z\to\infty}|\mathbf h_s^T\mathbf w|^2,
\end{equation}
which satisfies $0\le D(\mathbf w)\le\zeta/3$ by Cauchy--Schwarz.
\begin{anup}
The limiting \gls{sandc} channel overlap is denoted by
\begin{equation}
\label{eq:Cccf_def}
C_{\rho}
\triangleq
\lim_{N_y,N_z\to\infty}\rho \in [0,1].
\end{equation}
The existence of this limit follows from \eqref{eq:norm_limit}: since
$\norm{\mathbf h_i}^2\to\zeta/3<\infty$, $i\in\{c,s\}$, the
Cauchy--Schwarz inequality gives
\[
\sum_n |h_{c,n}||h_{s,n}|
\leq
\norm{\mathbf h_c}\norm{\mathbf h_s}<\infty,
\]
and hence $\mathbf h_c^H\mathbf h_s$ converges absolutely. Consequently,
$C_\rho$ is well defined, with its value determined by the relative
CU--target geometry through the limiting spatial overlap of their
channels; its closed-form evaluation is generally intractable.
\par Substituting
\eqref{eq:norm_limit}--\eqref{eq:Cccf_def} into
\eqref{eq:R1_closed_form} and \eqref{eq:Rs_general} yields
\eqref{eq:R1_largearray} and \eqref{eq:Rs_largearray}, respectively.
Hence, both $R_{c,1}(\mathbf{w},\alpha)$ and
$R_s(\mathbf{w},\alpha)$ remain finite as the array size grows.
\begin{figure*}
\begin{equation}
\label{eq:R1_largearray}
R_{c,1}^{(\infty)}(\mathbf{w},\alpha)
\triangleq
\lim_{N_y,N_z\to\infty}R_{c,1}(\mathbf{w},\alpha)
=
\log_2\!\left(
1+
p_{c,1}
\frac{
\frac{\zeta}{3}
\left[
1+
p_s\alpha_s D(\mathbf{w})\frac{\zeta}{3}(1-C_{\rho})
\right]
}{
1+
p_{c,2}\frac{\zeta}{3}
+
p_s\alpha_s D(\mathbf{w})\frac{\zeta}{3}
+
p_{c,2}p_s\alpha_s D(\mathbf{w})\frac{\zeta^2}{9}(1-C_{\rho})
}
\right).
\end{equation}
\hrulefill
\vspace{-1.2em}
\end{figure*}
\begin{figure*}
\begin{equation}
\label{eq:Rs_largearray}
R_s^{(\infty)}(\mathbf{w},\alpha)
\triangleq
\lim_{N_y,N_z\to\infty}R_s(\mathbf{w},\alpha)
=
\frac{1}{L}\log_2\!\left(
1+
p_sL\alpha_s D(\mathbf{w})\frac{\zeta}{3}
\left[
1-
\frac{
p_{c,2}C_{\rho}\frac{\zeta}{3}
}{
1+p_{c,2}\frac{\zeta}{3}
}
\right]
\right).
\end{equation}
\hrulefill
\vspace{-1.2em}
\end{figure*}
Similarly, substituting
\eqref{eq:norm_limit}--\eqref{eq:Cccf_def} and
\eqref{eq:Rs_largearray} into \eqref{eq:R2_closed_form} yields the
large-array second-stream rate in \eqref{eq:R2_largearray}.
\begin{figure*}[!t]
\begin{equation}
\label{eq:R2_largearray}
\begin{split}
R_{c,2}^{(\infty)}(\mathbf{w},\alpha)
&\triangleq
\lim_{N_y,N_z\to\infty}R_{c,2}(\mathbf{w},\alpha)
=
\log_2\!\left(
1+p_{c,2}\frac{\zeta}{3}
\right)
-
\log_2\!\left(
1+
\frac{
p_{c,2}\frac{\zeta}{3}
}{
1+p_{c,2}\frac{\zeta}{3}
}
\frac{
p_s\alpha_sD(\mathbf{w})\frac{\zeta}{3}C_{\rho}
}{
1+
p_sL\alpha_sD(\mathbf{w})\frac{\zeta}{3}
\left[
1-
\frac{
p_{c,2}C_{\rho}\frac{\zeta}{3}
}{
1+p_{c,2}\frac{\zeta}{3}
}
\right]
}
\right).
\end{split}
\end{equation}
\hrulefill
\vspace{-1.2em}
\end{figure*}
Since $0\le D(\mathbf{w})\le\zeta/3$ and
$0\le C_{\rho}\le1$, \eqref{eq:R2_largearray} remains finite.
Moreover, it is non-increasing in $D(\mathbf{w})$, with the largest
and smallest values attained at $D(\mathbf{w})=0$ and
$D(\mathbf{w})=\zeta/3$, respectively. Consequently, the \gls{sr}
and both communication-stream rates remain finite in the near-field
large-array regime.
\end{anup}
\vspace{-0.5cm}
\begin{anup}
\section{Rate-Region Characterization}
\label{sec:rate_region_characterization}
In this section, we characterize the \gls{sr}--\gls{cr} trade-off of the
\gls{rs}-inspired uplink \gls{isac} scheme under joint optimization of
the sensing beamformer $\mathbf w$ and the \gls{rs} split factor
$\alpha$. The aggregate \gls{cr} is
\begin{equation}
R_c(\mathbf w,\alpha)
=
R_{c,1}(\mathbf w,\alpha)
+
R_{c,2}(\mathbf w,\alpha),
\end{equation}
while the corresponding \gls{sr} is $R_s(\mathbf w,\alpha)$.
\par From \eqref{eq:R1_closed_form}, \eqref{eq:Rs_general}, and
\eqref{eq:R2_closed_form}, it follows that the sensing beamformer
$\mathbf w$ affects all rate expressions only through the scalar term
$\abs{\mathbf h_s^T\mathbf w}^2$. Hence, its effect can be parameterized
by the normalized sensing illumination factor
\begin{equation}
\label{eq:xi_def}
\xi
\triangleq
\frac{\abs{\mathbf h_s^T\mathbf w}^2}
{\norm{\mathbf h_s}^2}
\in[0,1].
\end{equation}
For the considered multi-antenna \gls{bs}, every $\xi\in[0,1]$ is
achievable under the unit-norm beamforming constraint. Hence, the joint beamformer-and-split optimization reduces
exactly to the two scalar variables $(\xi,\alpha)$.
\par Let $R_s(\xi,\alpha)$ and $R_c(\xi,\alpha)$ denote the
corresponding rates obtained by substituting
$\abs{\mathbf h_s^T\mathbf w}^2=\xi\norm{\mathbf h_s}^2$ into the
respective rate expressions. The achievable \gls{rs}-inspired
\gls{sr}--\gls{cr} region is then given by
\begin{equation}
\mathcal{C}_{\mathrm{RS}}
\triangleq
\bigcup_{\substack{0\le\xi\le1,\,0\le\alpha\le1}}
\left\{
(R_s,R_c):
\begin{array}{l}
R_s\le R_s(\xi,\alpha),\\
R_c\le R_c(\xi,\alpha)
\end{array}
\right\}.
\end{equation}
\par Subsequently, the Pareto boundary of $\mathcal{C}_{\mathrm{RS}}$ can be
characterized using the rate-profile method. For a given profile
parameter $\sigma\in[0,1]$, the corresponding boundary point is obtained
from
\begin{equation}
\label{eq:rate_profile_problem_joint}
\begin{aligned}
\max_{\xi,\alpha,R}\quad & R \\
\mathrm{s.t.}\quad
& R_s(\xi,\alpha)\ge \sigma R,\\
& R_c(\xi,\alpha)\ge (1-\sigma)R,\\
& 0\le\xi\le1,\;0\le\alpha\le1.
\end{aligned}
\end{equation}
For each $\sigma$, the split factor $\alpha$ is first optimized for a
given $\xi$, after which the resulting rate-profile value is maximized
over $\xi\in[0,1]$. Since both variables are optimized over their complete
feasible intervals, this nested procedure is equivalent to the joint
optimization in \eqref{eq:rate_profile_problem_joint}.
\par Consistent with this nested characterization, consider a given
sensing illumination factor $\xi\in[0,1]$. For fixed $\xi$, varying the
\gls{rs} split factor $\alpha\in[0,1]$ parametrizes the corresponding
\gls{rs}-inspired operating curve. The endpoint values $\alpha=0$ and $\alpha=1$ reduce
the \gls{rs}-inspired architecture to the  \gls{noma}-inspired \gls{stc} and \gls{cc}
operating modes, respectively, whereas $0<\alpha<1$ realizes message
splitting across the two decoding positions within a single frame. Let
\begin{equation}
A_{\xi}
=
\bigl(R_s(\xi,0),R_c(\xi,0)\bigr),
\,
B_{\xi}
=
\bigl(R_s(\xi,1),R_c(\xi,1)\bigr)
\end{equation}
denote the corresponding endpoint operating points. Here, both endpoint rates are evaluated under the same sensing-echo
cancellation model as the \gls{rs}-inspired receiver.
\par For each $\xi$, the \gls{noma}-inspired endpoint-\gls{sic}
benchmark obtains intermediate operating points by time sharing between
$A_{\xi}$ and $B_{\xi}$. Specifically, for
$\vartheta\in[0,1]$,
\begin{equation}
\label{eq:noma_ts_baseline}
\begin{aligned}
R_s^{\mathrm{NOMA}}(\xi,\vartheta)
&=
\vartheta R_s(\xi,0)
+
(1-\vartheta)R_s(\xi,1),\\
R_c^{\mathrm{NOMA}}(\xi,\vartheta)
&=
\vartheta R_c(\xi,0)
+
(1-\vartheta)R_c(\xi,1).
\end{aligned}
\end{equation}
Hence, the \gls{noma}-inspired decoding-order time-sharing region,
accounting for all feasible sensing illumination factors, is
\begin{equation}
\label{eq:noma_joint_region}
\mathcal{C}_{\mathrm{NOMA}}^{\mathrm{TS}}
=
\bigcup_{\substack{\xi\in[0,1]\\\vartheta\in[0,1]}}
\left\{
(R_s,R_c):
\begin{array}{l}
R_s\le R_s^{\mathrm{NOMA}}(\xi,\vartheta),\\
R_c\le R_c^{\mathrm{NOMA}}(\xi,\vartheta)
\end{array}
\right\}.
\end{equation}
\par This construction parallels the Gaussian uplink
\gls{mac} rate-region characterization, where, for a given physical
channel configuration, the two \gls{sic} decoding orders define the
corner points of the dominant face and intermediate operating points are
obtained through time sharing. Here, for each $\xi$, $A_{\xi}$ and
$B_{\xi}$ play the corresponding endpoint roles, while varying
$\alpha\in[0,1]$ traces the single-frame \gls{rs}-inspired boundary
between them. The outer optimization over
$\xi$ then accounts for all feasible sensing illumination factors. The sensing-matched beamformer
in~\cite{Zhao2024NearFieldISAC} is recovered as the special case
$\xi=1$. The following proposition establishes the resulting containment relation
between the \gls{rs}-inspired and \gls{noma}-inspired regions.
\begin{proposition}
\label{prop:rs_dominates_noma_fixed_xi}
For any $\xi\in[0,1]$ and $L\geq1$, the \gls{rs}-inspired boundary
obtained by varying $\alpha\in[0,1]$ is concave in the
$(R_s,R_c)$ plane between $A_{\xi}$ and $B_{\xi}$. Consequently,
\begin{equation}
\operatorname{conv}\!\left\{A_{\xi},B_{\xi}\right\}
\subseteq
\mathcal{C}_{\mathrm{RS}}(\xi),
\end{equation}
where $\mathcal{C}_{\mathrm{RS}}(\xi)$ denotes the achievable
\gls{rs}-inspired region for a given $\xi$. Since the result holds for
all $\xi\in[0,1]$,
\begin{equation}
\label{eq:rs_contains_noma_joint}
\mathcal{C}_{\mathrm{NOMA}}^{\mathrm{TS}}
\subseteq
\mathcal{C}_{\mathrm{RS}}.
\end{equation}
Hence, the jointly optimized \gls{rs}-inspired region contains the
\gls{noma}-inspired decoding-order time-sharing region while operating
through a single-frame message split.
\end{proposition}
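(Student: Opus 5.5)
The plan is to fix $\xi$ and reduce the \gls{rs}-inspired operating curve to a planar curve in one scalar parameter. Then show that this curve, viewed as $R_c$ versus $R_s$, is the graph of a concave function. Containment then follows because each region is comprehensive (downward closed).

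\emph{Reduction.} Fix $\xi\in[0,1]$ and set $a\triangleq p_c\norm{\mathbf h_c}^2$ and $b\triangleq p_s\alpha_s\xi\norm{\mathbf h_s}^4$. Substituting $\abs{\mathbf h_s^T\mathbf w}^2=\xi\norm{\mathbf h_s}^2$ into \eqref{eq:Rs_general} shows that the \gls{sr} depends on $\alpha$ only through $x\triangleq\alpha a\in[0,a]$:
\begin{equation*}
R_s=\tfrac1L\log_2\bigl(1+Lb\,g(x)\bigr),\qquad g(x)=\frac{1+(1-\rho)x}{1+x}.
\end{equation*}
The same holds for \eqref{eq:R1_closed_form} and \eqref{eq:R2_closed_form}. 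In \eqref{eq:R2_closed_form} the factor $2^{-LR_s}$ becomes $1/(1+Lbg(x))$. Writing $R_{c,1}$ as a ratio of determinants (matrix determinant lemma applied to $\mathbf R_1$ with and without the first stream) gives the compact form
\begin{equation*}
R_{c,1}=\log_2\frac{(1+a)(1+b\,\eta(a))}{(1+x)(1+b\,\eta(x))},
\end{equation*}
where $\eta(y)\triangleq 1-\rho y/(1+y)$. Hence $R_c(x)=R_{c,1}(x)+R_{c,2}(x)$ is an explicit rational-logarithmic function of $x$.

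\emph{Degenerate case.} If $\rho=0$, then $R_s$ does not depend on $x$ and $R_c\equiv\log_2(1+a)$. Thus $A_\xi=B_\xi$ and the claim is trivial, so assume $\rho>0$.

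\emph{Main case.} For $\rho>0$, $g$ is strictly decreasing, so $R_s$ is strictly decreasing in $x$. Moreover $R_c$ is nondecreasing in $x$, since $R_{c,1}$ decreases, $R_{c,2}$ increases, and the net is an increase, as stated after Proposition~\ref{prop:R2_closed}. The map $x\mapsto R_s$ is therefore invertible on $[0,a]$, and the boundary is the graph of $R_c$ as a function of $R_s$ between $B_\xi$ ($x=a$) and $A_\xi$ ($x=0$). Concavity of this graph is equivalent to the slope magnitude
\begin{equation*}
S(x)\triangleq\frac{R_c'(x)}{-R_s'(x)}
\end{equation*}
being nonincreasing in $x$. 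Decreasing $R_s$ corresponds to increasing $x$, and concavity means the tangent slope $dR_c/dR_s=-S$ rises toward $0$ as $R_s$ decreases.

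I would compute both derivatives in closed form. Note that $-R_s'(x)\propto b\rho/\bigl[(1+x)^2(1+Lbg(x))\bigr]$. After clearing common positive factors, the claim reduces to the sign of a polynomial inequality in $(x,b,\rho,L)$. I would aim to show that each term has a definite sign using $0<\rho\le1$ and $L\ge1$. This is where $L\ge1$ is used: it makes the residual penalty term in $R_{c,2}$, which carries $1/(1+Lbg)$, decay at least as fast as the gain in $R_s$.

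\emph{Containment.} Given concavity, the chord $\operatorname{conv}\{A_\xi,B_\xi\}$ lies on or below the curve. Every time-sharing point $(R_s^{\mathrm{NOMA}},R_c^{\mathrm{NOMA}})$ in \eqref{eq:noma_ts_baseline} is therefore componentwise dominated by the curve point with the same $R_s$, which lies in $\mathcal C_{\mathrm{RS}}(\xi)$. Taking the union over $\xi\in[0,1]$ in \eqref{eq:noma_joint_region} yields \eqref{eq:rs_contains_noma_joint}.

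\emph{Expected obstacle.} The main obstacle is the monotonicity of $S(x)$, because $R_{c,2}$ couples $x$ to $R_s$ through $2^{-LR_s}$. If the direct polynomial sign check becomes unwieldy, I would first try the reparametrization $u\triangleq 2^{LR_s}=1+Lbg(x)$. Solving for $x$ as a Möbius function of $u$ lets me express $R_c$ as a sum of logarithms of affine or Möbius functions of $u$. I would then prove concavity in $u$ termwise and compose with the concave increasing map $R_s=\frac1L\log_2 u$, checking the needed monotonicity of $dR_c/du$ separately.
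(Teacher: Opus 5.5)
Your architecture matches the paper's. You fix $\xi$, reduce to a one-parameter curve, show that $R_c$ is concave in $R_s$ along it, and then get $\operatorname{conv}\{A_\xi,B_\xi\}\subseteq\mathcal C_{\mathrm{RS}}(\xi)$ from concavity plus downward closure, taking the union over $\xi$ at the end. That final containment step is correct and does not need monotonicity of $R_c$.

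The gap is that the one nontrivial step, concavity itself, is never established. Your main route is a polynomial sign check that you do not carry out, and the claims around it are false.

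\emph{$R_c$ is not monotone in $x$.} The text after Proposition~\ref{prop:R2_closed} asserts monotonicity only for $R_{c,2}$. Section~\ref{sec:num_results} explicitly reports that an interior split can give a larger aggregate CR than $\alpha=1$. So $S(x)$ can change sign. Your equivalence ``concave $\Leftrightarrow$ $S$ nonincreasing'' is still valid, but reading $S$ as a slope magnitude with $-S$ rising toward $0$ is not.

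\emph{Your fallback fails.} The composition rule you intend to use (an outer function of the convex map $u=2^{LR_s}$) requires $R_c$ to be concave and nonincreasing in $u$. The ``monotonicity of $dR_c/du$'' you plan to check separately is exactly the false claim above. Termwise concavity in $u$ also fails at once, because the terms $-\log_2(\text{affine in }u)$ are convex. The sum is not concave in $u$ either: for $L=2$ and $b=4.5$, at $A_\xi$ (where $u=1+Lb=10$), $\ln 2\,\frac{d^2R_c}{du^2}=\frac{1}{121}-\frac{1}{100}+\frac{1}{400}>0$.

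\emph{A missing case.} You omit $\xi=0$, where $R_s\equiv0$ even when $\rho>0$. This is harmless because $A_\xi=B_\xi$ there, which is why the paper treats the case $\xi\rho=0$ separately.

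The missing idea is a cancellation one line away from your own compact form of $R_{c,1}$. Because $x\rho/(1+x)=1-g(x)$, the $\log_2(1+x)$ terms in $R_{c,1}$ and $R_{c,2}$ cancel, and
\[
R_c=K_\xi-\log_2\bigl(1+b\,g(x)\bigr)+\log_2\bigl(1+Lb\,g(x)\bigr)-\log_2\bigl(1+b+(L-1)b\,g(x)\bigr),
\]
with $K_\xi$ independent of $x$. This is the paper's \eqref{eq:app_Rc_res_a}, where the paper's $a$ is your $b\,g(x)$ and $V_\xi=b$. Since $1+Lb\,g(x)=2^{LR_s}$, $R_c$ is an explicit function of $R_s$ alone. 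Setting $t\triangleq LR_s\ln 2$,
\[
R_c\ln 2=\mathrm{const}+t-\ln\bigl(e^t+L-1\bigr)-\ln\bigl((L-1)e^t+1+Lb\bigr).
\]
Each $\ln(\gamma e^t+\delta)$ with $\gamma,\delta\ge0$ is convex in $t$, so every term is concave in $t$, and hence $R_c$ is concave in $R_s$. This is exactly where $L\ge1$ enters: it guarantees the coefficient $L-1\ge0$, rather than acting through the decay heuristic you describe. Differentiating twice reproduces \eqref{eq:app_second_deriv_res}. That expression vanishes for $L=1$, where $R_c$ is in fact constant along the curve. Concavity has to be checked in the logarithmic variable $t$, i.e., in $R_s$ itself, not in $u$.
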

\begin{proof}
The proof is detailed in
Appendix~\ref{app:proof_rs_dominates_noma_fixed_xi}.
\end{proof}
Proposition~\ref{prop:rs_dominates_noma_fixed_xi} establishes that the
containment holds for every sensing illumination factor $\xi$, and
therefore also under joint optimization over $\xi$. Unlike the Gaussian
uplink \gls{mac}, where \gls{rs} recovers the affine dominant face
generated by time sharing between \gls{sic} decoding orders, varying
$\alpha$ in uplink \gls{isac} also changes the \gls{sr} through the
interference produced by the communication stream decoded after sensing.
Consequently, the \gls{rs}-inspired boundary is curved and
lies on or above the corresponding endpoint-\gls{sic} time-sharing
chord.
\end{anup}
\begin{anup}
\section{Numerical Results}\label{sec:num_results}
\begin{table}[!t]
\vspace{-0.5cm}
\centering
\caption{Simulation parameters.}
\label{tab:sim_params}
\footnotesize
\renewcommand{\arraystretch}{1.15}
\setlength{\tabcolsep}{3pt}
\begin{tabular}{l c}
\toprule
Parameter & Value \\
\midrule
Carrier wavelength, \(\lambda\) & \(0.125~\mathrm{m}\) \\
Antenna spacing, \(d\) & \(\lambda/2\) \\
Element area, \(A\) & \(\lambda^2/(4\pi)\) \\
Array size & \(N_y=N_z=15\) \\
Frame length, \(L\) & \(4\) \\
Target reflection power, \(\alpha_s\) & \(1\) \\
Communication power, \(p_c\) & \(60~\mathrm{dB}\) \\
Sensing power, \(p_s\) & \(85~\mathrm{dB}\) \\
CU location, \((r_c,\theta_c,\phi_c)\) & \((10~\mathrm{m},\pi/4,\pi/6)\) \\
Weak-alignment target location & \((5~\mathrm{m},\pi/4,-\pi/6)\) \\
Strong-alignment target location & \((10~\mathrm{m},\pi/4,\pi/6+\Delta_\phi)\) \\
Angular offset, \(\Delta_\phi\) & \(2^\circ\) \\
\bottomrule
\end{tabular}\vspace{-0.3cm}
\end{table}
In this section, we present numerical results to evaluate the
\gls{sandc} trade-off of the \gls{rs}-inspired uplink
\gls{isac} scheme. Unless otherwise stated, we follow the baseline
near-field uplink setup in~\cite{Zhao2024NearFieldISAC}, with the
numerical parameters summarized in Table~\ref{tab:sim_params}. Since the receiver noise is normalized to unit variance, $p_c$ and $p_s$
denote noise-normalized transmit-power parameters. We  next fix
the \gls{cu} at $(r_c,\theta_c,\phi_c)=(10~\mathrm{m},\pi/4,\pi/6)$. To examine the role of \gls{sandc} channel alignment, we consider two target
placements: a weak-alignment placement corresponding to the baseline
geometry, which yields a small channel-correlation factor $\rho$, and a
strong-alignment placement obtained by placing the target close to the
\gls{cu} in range and angle, which yields a larger $\rho$.
For the near- and far-field channel-model comparison, we retain the same
physical geometry and system parameters and vary only the propagation
model as per equation \eqref{eq:nf_ff_element_channel}~\cite{Zhao2024NearFieldISAC}. The resulting far-field \gls{upw} curves
therefore provide a controlled propagation-model baseline for isolating
the effect of spherical-wave near-field propagation.
\par We organize the numerical analysis into two parts. First, we examine
the \gls{sr}--\gls{cr} rate regions. We start with sensing-matched
illumination, corresponding to $\xi=1$, to illustrate the effect of
single-frame message splitting and the differences between near- and
far-field propagation. We then study the rate regions obtained by
jointly optimizing $\xi$ and $\alpha$. Second, we validate the
asymptotic results in Section~\ref{subsec:num_asymptotic_power},
including the high-\gls{snr} behaviour under common power scaling and
the near-field large-array limits in
Section~\ref{subsubsec:asymp_ana}.
\end{anup}
\vspace{-0.1cm}
\subsection{Rate-Region Analyses}\label{subsubsec:rr_sensing_matched}
\AM{\subsubsection{Sensing-Matched Rate-Region} Figs.~\ref{fig:rr_xi1_nf_smallrho}--\ref{fig:rr_xi1_ff_largerho}
show the sensing-matched rate regions under near-field and far-field
propagation for weak ($\rho\ll1$) and strong ($\rho\lesssim1$)
\gls{sandc} channel alignment, respectively. In all cases,
point $\mathrm{A}$ denotes the common \gls{stc} endpoint at
$\alpha=0$.} Since no second communication stream is present at this point,
$\mathrm{A}$ is identical for the \gls{rs}-inspired and
\gls{noma}-inspired schemes under both \gls{mmse}-based and
ideal-cancellation assumptions. Under ideal cancellation, increasing
the role of the \gls{cc} endpoint leads both schemes to the same
$\alpha=1$ point, denoted by $\mathrm{C}$. This point already achieves
the communication-oriented ideal-cancellation limit, so both the
\gls{rs}-inspired and \gls{noma}-inspired regions share the same
$R_s=0$ intercept, denoted by $\mathrm{D}$. Point $\mathrm{D}$ also
coincides with the \gls{fdsac} point at $\kappa=0$, since both
correspond to the pure communication-only rate.
\par Under residual-aware sensing-echo cancellation, where the target
response is estimated using the \gls{mmse} estimator and the resulting
estimation residual is explicitly accounted for, point $\mathrm{B}$
denotes the \gls{cc} endpoint corresponding to $\alpha=1$. It lies
below the ideal-cancellation endpoint $\mathrm{C}$ because the
remaining target-response uncertainty penalizes the communication
stream decoded after sensing. For the \gls{noma}-inspired benchmark,
$\mathrm{B}$ is the communication-oriented endpoint, since only the
two endpoint modes $\alpha=0$ and $\alpha=1$ are available;
consequently, its $R_s=0$ intercept is $\mathrm{F}$ through downward
closure. The \gls{rs}-inspired scheme also contains $\mathrm{B}$ as a
feasible endpoint, but can additionally realize intermediate operating
points through an interior split $\alpha\in(0,1)$. For weak channel alignment, the effect of the split on the aggregate \gls{cr} is very small, and the maximum communication rate is attained at the $\alpha=1$ endpoint, so that $\mathrm{E}$ and $\mathrm{F}$ coincide. Accordingly, the \gls{rs}-inspired boundary is almost indistinguishable from the \gls{noma}-inspired time-sharing chord. By contrast, under strong channel alignment, an interior split can
increase the aggregate \gls{cr} above the $\alpha=1$ endpoint. The
\gls{rs}-inspired region therefore reaches the $R_s=0$ axis at the
higher point $\mathrm{E}$, while the \gls{noma}-inspired region
terminates at $\mathrm{F}$. The resulting separation is visible under
both near-field and far-field propagation regimes.
\begin{figure}[!t]
    \centering
    \includegraphics[width=\linewidth]{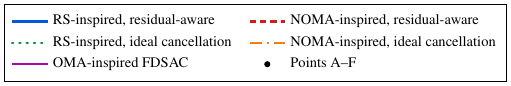}\\[0.5pt]

    \subfloat[Near-field, $\rho\ll1$.\label{fig:rr_xi1_nf_smallrho}]{
        \includegraphics[width=0.47\linewidth]{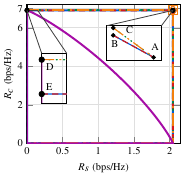}
    }
    \hfill
    \subfloat[Near-field, $\rho\lesssim1$.\label{fig:rr_xi1_nf_largerho}]{
        \includegraphics[width=0.47\linewidth]{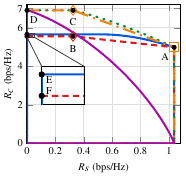}
    }\\[2pt]

    \subfloat[Far-field, $\rho\ll1$.\label{fig:rr_xi1_ff_smallrho}]{
        \includegraphics[width=0.47\linewidth]{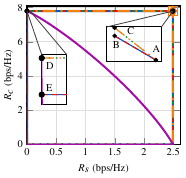}
    }
    \hfill
    \subfloat[Far-field, $\rho\lesssim1$.\label{fig:rr_xi1_ff_largerho}]{
        \includegraphics[width=0.47\linewidth]{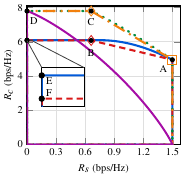}
    }

    \caption{Sensing-matched \gls{sr}--\gls{cr} rate regions under near- and far-field propagation for weak and strong \gls{sandc} channel alignment.}
    \label{fig:rr_xi1}\vspace{-0.5cm}
\end{figure}
\begin{anup}
\par Comparing the near-field and far-field panels shows that the
qualitative rate-region structure is preserved across both propagation
regimes. In particular, weak alignment leads to nearly coincident
\gls{rs}-inspired and \gls{noma}-inspired boundaries, whereas strong
alignment produces a clear gain from single-frame message splitting and
a larger separation between the residual-aware and ideal-cancellation
rate regions. The propagation model primarily changes the channel gains
and, consequently, the resulting rate offsets and the impact of the
remaining target-response uncertainty on post-sensing communication,
while preserving the ordering of the schemes.
\par These results establish three key observations for the
sensing-matched setting under both near-field and far-field propagation.
First, accounting for the remaining target-response uncertainty and its
impact on post-sensing communication reveals a systematic reduction of
the achievable rate region relative to the ideal-cancellation benchmark,
with the effect becoming increasingly pronounced as the
\gls{sandc} channel alignment strengthens. Second, consistent
with Proposition~\ref{prop:rs_dominates_noma_fixed_xi}, the
\gls{rs}-inspired scheme realizes the intermediate
\gls{sandc} trade-off within a single frame through message
splitting, without requiring the inter-frame time sharing used by the
\gls{noma}-inspired benchmark. As a result, its achievable rate region
contains the \gls{noma}-inspired endpoint-\gls{sic} time-sharing region,
with stronger channel alignment making the separation between the two
boundaries increasingly evident. Third, existing \gls{noma}-inspired
uplink \gls{isac} analyses with fixed sensing beamforming and perfect
sensing-echo cancellation~\cite{Zhao2024NearFieldISAC,OuyangImpactrevealSIC}
show that the non-orthogonal rate region contains the \gls{fdsac}
region. The residual-aware analysis demonstrates that this containment
is not generally preserved for the sensing-matched $\xi=1$ region once
the remaining target-response uncertainty is explicitly accounted for.
This, in turn, motivates the jointly optimized rate-region analysis over
$\xi$ and $\alpha$ considered next.
\subsubsection{Jointly Optimized Rate Regions}
Figures~\ref{fig:rr_xiopt_smallrho} and~\ref{fig:rr_xiopt_largerho}
show the achievable near-field rate regions when the sensing illumination
$\xi$ and message split $\alpha$ are jointly optimized for weak
($\rho\ll1$) and strong ($\rho\lesssim1$) \gls{sandc}
channel alignment, respectively. Relative to the sensing-matched cases,
optimizing $\xi$ enlarges the communication-oriented part of the region
by reducing the impact of the remaining target-response uncertainty on
post-sensing communication, while preserving the sensing-oriented
endpoint $\mathrm{A}$. The communication-axis points of the
sensing-matched regions consequently merge into the common
communication-only point $\mathrm{D}$ at $\xi=0$.
\par \par Joint optimization also reveals how the two control variables are
used across channel conditions. Under residual-aware cancellation, the
optimum is governed predominantly by $\xi$, whereas under ideal
cancellation an interior $\alpha$ remains beneficial, particularly under
strong alignment. In both regimes, the optimized \gls{rs}-inspired
region contains the corresponding \gls{noma}-inspired endpoint-\gls{sic}
time-sharing region within a single frame. Moreover, optimizing $\xi$
enlarges both regions, reducing the \gls{fdsac} advantage and recovering
the common communication-only endpoint $\mathrm{D}$.
\begin{figure}[!b]
\vspace{-0.6cm}
    \centering
    \includegraphics[width=\linewidth]{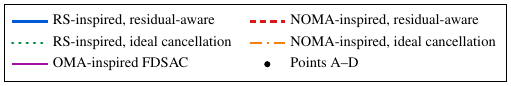}\\[0.5pt]
    \subfloat[$\rho\ll1$.\label{fig:rr_xiopt_smallrho}]{
        \includegraphics[width=0.47\linewidth]{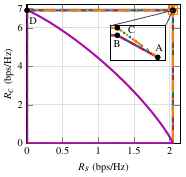}
    }
    \hfill
    \subfloat[$\rho\lesssim1$.\label{fig:rr_xiopt_largerho}]{
        \includegraphics[width=0.47\linewidth]{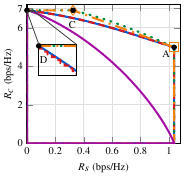}
    }
    \caption{Jointly optimized near-field \gls{sr}--\gls{cr} rate regions
    obtained by optimizing $\xi$ and
    $\alpha$ for weak and strong \gls{sandc} channel alignment.}
    \label{fig:rr_xiopt}
\end{figure}
\end{anup}
\subsection{Asymptotic Analysis}
\label{subsec:num_asymptotic_power}
\subsubsection{High-\gls{snr} Analysis Under Common Power Scaling}
\label{subsubsec:num_highsnr_common_scaling}
We first validate the high-\gls{snr} behaviour derived in Section~\ref{subsubsec:highsnr} under the common power-scaling law $p_c=\bar p_c p,\; p_s=\bar p_s p,\; p\rightarrow\infty$, where \(p\) is the common power-scaling parameter. Unless otherwise stated, we set \(\bar p_c=1\) and \(\bar p_s=10^{25/10}\), so that the sensing power is \(25\) dB larger than the communication power, consistent with the rate-region setup in Table~\ref{tab:sim_params}. The intermediate \gls{rs} split is fixed as \(\alpha=0.5\), and the \gls{fdsac} bandwidth fraction is fixed as \(\kappa=0.5\). This fixed-\(\alpha\) setting is used to directly validate the asymptotic expressions, rather than tracing an optimized rate-profile point.
\par Fig.~\ref{fig:asymptotic_power} reports the \gls{cr} and \gls{sr} as functions of \(p\). For \(\rho<1\), the \gls{cr} curves approach the predicted slope-one behaviour. This agrees with \eqref{eq:R2_highsnr_slope} and \eqref{eq:Rc_highsnr}, which show that the second stream, and hence the aggregate \gls{cr}, scales as \(\log_2 p+O(1)\). The first stream contributes only a finite-rate offset, as shown in \eqref{eq:R1_highsnr}. Consequently, the \(\alpha=0\) endpoint, i.e., S-C, the \(\alpha=1\) endpoint, i.e., C-C, and the intermediate \(\alpha=0.5\) \gls{rs}-inspired scheme have the same aggregate communication high-\gls{snr} slope in this regime. The separation between the ideal-cancellation and residual-aware curves is therefore an asymptotic rate-offset loss. It results from the remaining target-response uncertainty and its coupling to the communication stream decoded after sensing, whose effect is governed by the \gls{sandc} channel overlap $\rho$. For every fixed $\rho<1$, this effect does not alter the leading-order communication
slope. By contrast, \gls{fdsac} grows with communication slope $1-\kappa$, which equals $0.5$ for $\kappa=0.5$. For the \gls{sr}, the curves for $\rho<1$ follow the slope predicted by \eqref{eq:Rs_highsnr_slope}. The non-orthogonal schemes
achieve sensing high-\gls{snr} slope $1/L$, whereas \gls{fdsac} achieves the smaller slope $\kappa/L$. The $\alpha=0$ endpoint lies above the $\alpha=1$ endpoint and the intermediate $\alpha=0.5$ \gls{rs}-inspired sensing curves because no undecoded second communication stream remains during sensing when $\alpha=0$. For $\alpha>0$, the sensing stage treats $\mathbf{s}_{c,2}$ as interference, giving rise to the penalty term in \eqref{eq:Rs_general}. Nevertheless, for every fixed $\rho<1$, this penalty affects only the asymptotic offset, while the sensing slope remains $1/L$.
\begin{figure}[!t]
    \centering
    \includegraphics[width=\linewidth]{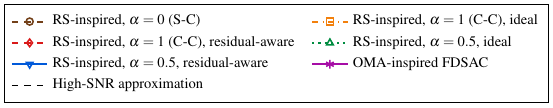}\\[0.2pt]

    \subfloat[CR, \(\rho<1\).\label{fig:cr_asymptotic_power_weak}]{
        \includegraphics[width=0.47\linewidth]{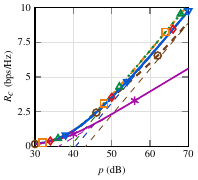}
    }
    \hfill
    \subfloat[SR, \(\rho<1\).\label{fig:sr_asymptotic_power_weak}]{
        \includegraphics[width=0.47\linewidth]{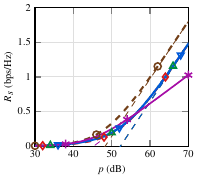}
    }
    \\[0.5em]
    \subfloat[CR, \(\rho=1\).\label{fig:cr_asymptotic_power_strong}]{
        \includegraphics[width=0.47\linewidth]{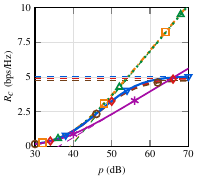}
    }
    \hfill
    \subfloat[SR, \(\rho=1\).\label{fig:sr_asymptotic_power_strong}]{
        \includegraphics[width=0.47\linewidth]{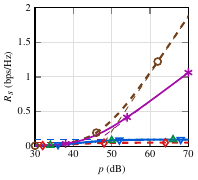}
    }
    \caption{High-\gls{snr} behaviour with \(p_c=\bar p_c p\) and
    \(p_s=\bar p_s p\), with \(\bar p_c=1\),
    \(\bar p_s=10^{25/10}\), \(\alpha=0.5\), and \(\kappa=0.5\).\vspace{-0.3cm}}
    \label{fig:asymptotic_power}
\end{figure}
\par The fully aligned case, $\rho=1$, exhibits a fundamentally
different asymptotic behaviour. In the \gls{cr} panel, the
residual-aware $\alpha=1$ C-C endpoint and the residual-aware
intermediate $\alpha=0.5$ \gls{rs}-inspired curve saturate, consistent
with \eqref{eq:R2_highsnr_rho1}. Under full alignment, the remaining
target-response uncertainty couples directly to the post-sensing
communication direction and prevents the second-stream rate from
growing with $p$. The $\alpha=0$ S-C endpoint also saturates under the
common scaling law because the sensing echo and desired communication
signal scale proportionally with $p$. In contrast, the
ideal-cancellation $\alpha=1$ C-C endpoint and the ideal-cancellation
intermediate $\alpha=0.5$ \gls{rs}-inspired curve retain slope-one
growth because the remaining sensing-echo uncertainty is removed by
assumption.  The corresponding \gls{sr} behaviour for $\rho=1$ follows
\eqref{eq:Rs_highsnr_rho1}. The $\alpha=1$ C-C endpoint and the
fixed-$\alpha$ \gls{rs}-inspired \gls{sr}s saturate because the
undecoded second communication stream is fully aligned with the
sensing channel and scales with $p$. By contrast, the $\alpha=0$ S-C
\gls{sr} continues to grow with slope $1/L$, since no second
communication stream remains during sensing. The \gls{fdsac} sensing
rate grows with slope $\kappa/L$, as expected from bandwidth
partitioning. 
\par Overall, Fig.~\ref{fig:asymptotic_power} confirms a sharp
distinction between the generic $\rho<1$ regime and the singular
fully aligned case. For $\rho<1$, the remaining target-response
uncertainty changes the high-\gls{snr} communication offset but not
the leading \gls{cr} slope, while the interference from the undecoded
second stream similarly changes only the \gls{sr} offset. At
$\rho=1$, these interference terms become asymptotically
slope-limiting for the corresponding \gls{sandc}
operations, producing saturation of the affected rates.

\subsubsection{Large-Array Regime}
\label{subsubsec:largearray_num}
We next validate the large-array behaviour derived in
Section~\ref{subsubsec:largearray}. We consider square \glspl{upa} with
$N_y=N_z$ and plot the rates as functions of the total number of
antennas $N=N_yN_z$. The \gls{rs} split factor and the \gls{fdsac}
bandwidth fraction are fixed as $\alpha=0.5$ and $\kappa=0.5$,
respectively. We consider a representative non-fully-aligned geometry.
The limiting overlap constant $C_\rho$ is evaluated numerically using
the largest simulated array. In the considered setup, the largest
array has $N\approx10^7$, yielding $C_\rho\approx0.565$, which is used
in evaluating the horizontal large-array limits. The aperture
occupation ratio is $\zeta=A/d^2$, and the channel gains satisfy
$\|\mathbf h_c\|^2,\|\mathbf h_s\|^2\rightarrow\zeta/3$, as in
\eqref{eq:norm_limit}. For sensing-matched beamforming,
$\abs{\mathbf h_s^T\mathbf w_s}^2=\|\mathbf h_s\|^2$, and hence this effective sensing gain also converges to $\zeta/3$.
\par Fig.~\ref{fig:large_array} shows the resulting \gls{cr} and \gls{sr}. The exact finite-array curves approach their corresponding horizontal large-array limits, confirming the bounded-rate behaviour predicted by \eqref{eq:R1_largearray}, \eqref{eq:Rs_largearray}, and \eqref{eq:R2_largearray}. This saturation is a direct consequence of the aperture-aware near-field channel model, under which the channel gains converge to finite limits as the array aperture grows. Accordingly, all considered schemes approach finite \gls{sandc} rates as $N\rightarrow\infty$. For the \gls{cr}, the residual-aware curves converge below their ideal-cancellation counterparts, showing that increasing the array size does not eliminate the communication penalty associated with the remaining target-response uncertainty. For the \gls{sr}, the $\alpha=0$ S-C endpoint attains the largest limit because no undecoded second communication stream remains during sensing, whereas the $\alpha=1$ and $\alpha=0.5$ cases retain the corresponding sensing-interference penalty. These results confirm that the \gls{rs}-inspired formulation, its \gls{stc} and \gls{cc} endpoint modes, and the \gls{fdsac} benchmark all inherit the finite large-array behaviour imposed by the physically consistent near-field channel model.
\begin{figure}[!t]
    \centering
    \includegraphics[width=\linewidth]{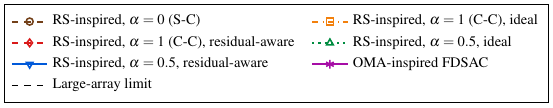}\\[0.5pt]
    \subfloat[\gls{cr}.\label{fig:large_array_cr}]{
        \includegraphics[width=0.474\linewidth]{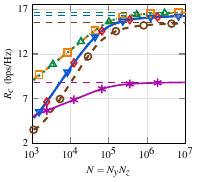}
    }
    \hfill
    \subfloat[\gls{sr}.\label{fig:large_array_sr}]{
        \includegraphics[width=0.47\linewidth]{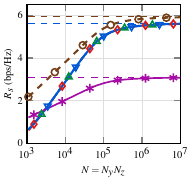}
    }
    \caption{Large-array behaviour for a representative geometry with \(C_\rho\approx0.565\), \(p_c=60\) dB and \(p_s=85\) dB, and \(\kappa=0.5\).\vspace{-2cm}}
    \label{fig:large_array}
\end{figure}
\begin{anup}
\vspace{-.5cm}
\section{Conclusion}\label{sec:conclusion}
This paper characterized the Pareto-optimal \gls{cr}--\gls{sr} rate
region of residual-aware \gls{rs}-inspired uplink \gls{isac} by jointly
optimizing the sensing illumination and communication-message split.
Closed-form \gls{cr} and \gls{sr} expressions were derived while
accounting for the remaining target-response uncertainty and its
statistical coupling to post-sensing communication. The rate-region
analysis proved that the resulting \gls{rs}-inspired region contains the
\gls{noma}-inspired endpoint-\gls{sic} time-sharing region. It further showed that residual sensing-echo interference can break
\gls{oma} containment under fixed illumination, while joint
sensing-illumination optimization enlarges the achievable region and
recovers the maximum \gls{cr} at zero \gls{sr}. The high-\gls{snr} analysis identified fundamentally different asymptotic behaviour under non-fully-aligned and fully aligned \gls{sandc} channels, while the near-field large-array analysis established finite asymptotic rate limits under aperture-aware propagation.
Numerical results validated the rate-region findings under both
near- and far-field propagation and confirmed the high-\gls{snr} and
near-field large-array analyses. Future work may consider fully
non-orthogonal multi-user and multi-target extensions, imperfect channel
knowledge, and other practical extensions.
\end{anup}
\appendices
\section{Proof of Proposition \ref{prop:R1_closed}}
\label{app:first_stream_rate}
This appendix derives a closed-form expression for the scalar term $\mathbf{h}_c^H\mathbf{R}_1^{-1}(\mathbf{w},\alpha)\mathbf{h}_c$. Starting from \eqref{eq:R1_cov}, 
define $a \triangleq p_{c,2},\;
b \triangleq p_s\alpha_s\abs{\mathbf{h}_s^T\mathbf{w}}^2,$ and let $A \triangleq \norm{\mathbf{h}_c}^2,\,
B \triangleq \norm{\mathbf{h}_s}^2,\,
\chi \triangleq \mathbf{h}_c^H\mathbf{h}_s.$ Then $\mathbf{R}_1(\mathbf{w},\alpha)$ can be written as
\begin{equation}
\mathbf{R}_1
=
\mathbf{I}_N+\mathbf{U}\mathbf{U}^H,
\end{equation}
where $\mathbf{U}
\triangleq
\begin{bmatrix}
\sqrt{a}\,\mathbf{h}_c & \sqrt{b}\,\mathbf{h}_s
\end{bmatrix}
\in\mathbb{C}^{N\times 2}.$
Applying the Woodbury identity yields
\begin{equation}
\label{eq:app_R1_inv}
\mathbf{R}_1^{-1}
=
\mathbf{I}_N
-
\mathbf{U}\bigl(\mathbf{I}_2+\mathbf{U}^H\mathbf{U}\bigr)^{-1}\mathbf{U}^H,
\end{equation}
where
\begin{equation}
\label{eq:app_2x2_inv}
\bigl(\mathbf{I}_2+\mathbf{U}^H\mathbf{U}\bigr)^{-1}
=
\frac{1}{\Delta}
\begin{bmatrix}
1+bB & -\sqrt{ab}\,\chi\\
-\sqrt{ab}\,\chi^* & 1+aA
\end{bmatrix},
\end{equation}
with $\Delta
\triangleq
(1+aA)(1+bB)-ab\abs{\chi}^2.$ Substituting \eqref{eq:app_2x2_inv} into \eqref{eq:app_R1_inv} and evaluating the resulting quadratic form gives
\begin{equation}
\label{eq:app_q_closedform}
\mathbf{h}_c^H\mathbf{R}_1^{-1}(\mathbf{w},\alpha)\mathbf{h}_c
=
\frac{
A(1+bB)-b\abs{\chi}^2
}{
(1+aA)(1+bB)-ab\abs{\chi}^2
}.
\end{equation}
Using the channel-correlation factor
\begin{equation}
\rho
=
\frac{\abs{\mathbf{h}_c^H\mathbf{h}_s}^2}{\norm{\mathbf{h}_c}^2\norm{\mathbf{h}_s}^2}
=
\frac{\abs{\chi}^2}{AB},
\end{equation}
\eqref{eq:app_q_closedform} can be equivalently expressed as
\begin{equation}
\label{eq:app_q_rho}
\mathbf{h}_c^H\mathbf{R}_1^{-1}(\mathbf{w},\alpha)\mathbf{h}_c
=
\frac{
A\bigl[1+bB(1-\rho)\bigr]
}{
1+aA+bB+abAB(1-\rho)
}.
\end{equation}
\vspace{-0.6cm}
\begin{anup}
\section{Proof of Proposition \ref{prop:Ce_closed}}
\label{app:mmse_beta_residual}

From \eqref{eq:Vec_Y_Sensing_Signal}, the sensing observation can be
written as
\begin{equation}
\mathbf y_s
\triangleq
\mathrm{vec}(\mathbf Y_s)
=
\mathbf g(\mathbf w)\beta+\mathbf z_s,
\end{equation}
where $\mathbf g(\mathbf w)
\triangleq
\sqrt{p_s}\,(\mathbf h_s^T\mathbf w)
(\mathbf s_s^*\otimes\mathbf h_s)$. Since $\beta\sim\mathcal{CN}(0,\alpha_s)$ and the observation model is
jointly Gaussian, the posterior mean is the optimal \gls{mmse}
estimator of $\beta$, given by
\begin{equation}
\label{eq:app_beta_hat}
\hat{\beta}
=
\alpha_s\mathbf g^H(\mathbf w)
\left(
\alpha_s\mathbf g(\mathbf w)\mathbf g^H(\mathbf w)
+
\mathbf R_s(\alpha)
\right)^{-1}
\mathbf y_s.
\end{equation}
Applying the matrix inversion lemma gives
\begin{equation}
\label{eq:app_beta_hat_mmse}
\hat{\beta}
=
\frac{
\alpha_s\,\mathbf g^H(\mathbf w)\mathbf R_s^{-1}(\alpha)\mathbf y_s
}{
1+\alpha_s\,\mathbf g^H(\mathbf w)\mathbf R_s^{-1}(\alpha)\mathbf g(\mathbf w)
}.
\end{equation}
Hence, with
$\tilde{\beta}\triangleq\beta-\hat{\beta}$, the
\gls{mmse} error variance is
\begin{equation}
\label{eq:app_beta_var}
\sigma_{\beta\mid\mathbf Y_s}^2(\mathbf w,\alpha)
=
\frac{
\alpha_s
}{
1+
\alpha_s\mathbf g^H(\mathbf w)
\mathbf R_s^{-1}(\alpha)\mathbf g(\mathbf w)
}.
\end{equation}
Using the \gls{sr} expression in \eqref{eq:Rs_general}, we obtain
\begin{equation}
\label{eq:app_beta_var_sr}
\sigma_{\beta\mid\mathbf Y_s}^2(\mathbf w,\alpha)
=
\alpha_s2^{-L R_s(\mathbf w,\alpha)}.
\end{equation}

After reconstructing the sensing echo using $\hat{\beta}$, the residual
sensing signal over the frame is
\begin{equation}
\mathbf E
=
\sqrt{p_s}\,\tilde{\beta}\,
\mathbf h_s\mathbf h_s^T\mathbf w\mathbf s_s^H.
\end{equation}
The residual at symbol $\ell$ is therefore
\begin{equation}
\mathbf e_\ell
=
\sqrt{p_s}\,\tilde{\beta}\,
(\mathbf h_s^T\mathbf w)s_{s,\ell}^*\mathbf h_s,
\qquad
\ell=1,\ldots,L.
\end{equation}
Accordingly,
\begin{align}
\mathbf C_{e,\ell}(\mathbf w,\alpha)
&\triangleq
\mathbb E\!\left[
\mathbf e_\ell\mathbf e_\ell^H
\right]
\nonumber\\
&=
p_s\,
\sigma_{\beta\mid\mathbf Y_s}^2(\mathbf w,\alpha)
\abs{\mathbf h_s^T\mathbf w}^2
\abs{s_{s,\ell}}^2
\mathbf h_s\mathbf h_s^H.
\end{align}
Since
$\abs{s_{s,\ell}}^2=1$ $\forall \ell$, the residual covariance is
identical across the frame and, using \eqref{eq:app_beta_var_sr}, is
given by
\begin{equation}
\mathbf C_e(\mathbf w,\alpha)
=
p_s\alpha_s2^{-L R_s(\mathbf w,\alpha)}
\abs{\mathbf h_s^T\mathbf w}^2
\mathbf h_s\mathbf h_s^H,
\end{equation}
which proves Proposition~\ref{prop:Ce_closed}. The covariance $\mathbf C_e(\mathbf w,\alpha)$ characterizes the marginal power and spatial structure of the residual sensing echo. Since $\hat{\beta}$ is obtained from an observation that also contains $\mathbf s_{c,2}$, the residual sensing term is generally correlated with the second communication stream. This statistical dependence is accounted for explicitly in the derivation of the second-stream
\gls{cr} in Appendix~\ref{app:second_stream_rate}.
\vspace{-0.2cm}
\section{Proof of Proposition~\ref{prop:R2_closed}}
\label{app:second_stream_rate}

Let $x_{2,\ell}\triangleq s_{c,2,\ell}^*$ and define
\begin{equation}
\mathbf u
\triangleq
\sqrt{p_s}\,(\mathbf h_s^T\mathbf w)\mathbf h_s,
\quad
\mathbf R_0
\triangleq
\mathbf I_N+p_{c,2}\mathbf h_c\mathbf h_c^H.
\end{equation}
The $\ell$th symbol-level observation after sensing-echo cancellation is
\begin{equation}
\label{eq:app_y2_symbol}
\mathbf y_{2,\ell}
=
\sqrt{p_{c,2}}\mathbf h_c x_{2,\ell}
+
\mathbf u s_{s,\ell}^*\widetilde{\beta}
+
\mathbf n_\ell,
\end{equation}
where $\widetilde{\beta}=\beta-\hat{\beta}$. From
Proposition~\ref{prop:Ce_closed}, we have
\begin{equation}
\label{eq:app_Ce_recall}
\mathbf C_e
=
\sigma_{\beta\mid\mathbf Y_s}^2
\mathbf u\mathbf u^H,
\quad
\sigma_{\beta\mid\mathbf Y_s}^2
=
\alpha_s2^{-LR_s(\mathbf w,\alpha)}.
\end{equation}
\par Since $\hat{\beta}$ is obtained from the frame-level observation
containing the second communication stream, $\widetilde{\beta}$ is
generally correlated with $x_{2,\ell}$. From the \gls{mmse} estimator
in \eqref{eq:app_beta_hat_mmse},
\begin{equation}
\label{eq:app_beta_x_corr}
\mathbb E[
\widetilde{\beta}x_{2,\ell}^*
]
=
-\sigma_{\beta\mid\mathbf Y_s}^2
\sqrt{p_{c,2}}\,
s_{s,\ell}\,
\mathbf u^H\mathbf R_0^{-1}\mathbf h_c.
\end{equation}
The
effective channel of $x_{2,\ell}$ is therefore
\begin{align}
\mathbf h_{2,\mathrm{eff}}
&\triangleq
\mathbb E[
\mathbf y_{2,\ell}x_{2,\ell}^*
]=
\sqrt{p_{c,2}}
\left(
\mathbf I_N-\mathbf C_e\mathbf R_0^{-1}
\right)\mathbf h_c
\nonumber\\
&=
\sqrt{p_{c,2}}
(\mathbf R_0-\mathbf C_e)
\mathbf R_0^{-1}\mathbf h_c.
\label{eq:app_h2_eff}
\end{align}

Moreover, using the orthogonality property of the \gls{mmse}
estimator and \eqref{eq:app_Ce_recall}, the covariance of the
post-cancellation observation is
\begin{equation}
\label{eq:app_Cy2}
\mathbf C_{y_2}
\triangleq
\mathbb E[
\mathbf y_{2,\ell}\mathbf y_{2,\ell}^H
]
=
\mathbf R_0-\mathbf C_e.
\end{equation}
Since $(x_{2,\ell},\mathbf y_{2,\ell})$ are jointly  complex Gaussian, we can write
\begin{equation}
\mathbf y_{2,\ell}
=
\mathbf h_{2,\mathrm{eff}}x_{2,\ell}
+
\mathbf v_\ell,
\end{equation}
where $\mathbf v_\ell$ is independent of $x_{2,\ell}$ and has covariance
\begin{equation}
\mathbf C_v
=
\mathbf C_{y_2}
-
\mathbf h_{2,\mathrm{eff}}
\mathbf h_{2,\mathrm{eff}}^H.
\end{equation}
Hence,
\begin{equation}
R_{c,2}
=
\log_2\!\left(
1+
\mathbf h_{2,\mathrm{eff}}^H
\mathbf C_v^{-1}
\mathbf h_{2,\mathrm{eff}}
\right).
\end{equation}
Since
$\mathbf C_{y_2}
=
\mathbf C_v+
\mathbf h_{2,\mathrm{eff}}\mathbf h_{2,\mathrm{eff}}^H$,
by matrix inversion lemma 
\begin{equation}
\label{eq:app_R2_gaussian}
R_{c,2}
=
-\log_2\!\left(
1-
\mathbf h_{2,\mathrm{eff}}^H
\mathbf C_{y_2}^{-1}
\mathbf h_{2,\mathrm{eff}}
\right).
\end{equation}

From \eqref{eq:app_h2_eff} and \eqref{eq:app_Cy2},
\begin{align}
\mathbf h_{2,\mathrm{eff}}^H
\mathbf C_{y_2}^{-1}
\mathbf h_{2,\mathrm{eff}}
&=
p_{c,2}\mathbf h_c^H
\mathbf R_0^{-1}
(\mathbf R_0-\mathbf C_e)
\mathbf R_0^{-1}\mathbf h_c
\nonumber\\
&=
\frac{P_2}{1+P_2}
-
\frac{P_2Q_e\rho}{(1+P_2)^2},
\label{eq:app_R2_quad}
\end{align}
where $P_2\triangleq p_{c,2}\norm{\mathbf h_c}^2,$ and
\begin{equation}
Q_e\triangleq
p_s\alpha_s2^{-LR_s(\mathbf w,\alpha)}
\abs{\mathbf h_s^T\mathbf w}^2
\norm{\mathbf h_s}^2.
\end{equation}
Here, we used
$\mathbf R_0^{-1}\mathbf h_c
=
\mathbf h_c/(1+P_2)$ and
$\abs{\mathbf h_c^H\mathbf h_s}^2
=
\rho\norm{\mathbf h_c}^2\norm{\mathbf h_s}^2$.
Substituting \eqref{eq:app_R2_quad} into
\eqref{eq:app_R2_gaussian}, and using the definition of $Q_e$, yields
\eqref{eq:R2_closed_form}, which proves
Proposition~\ref{prop:R2_closed}.
\vspace{-0.15cm}
\section{Proof of Proposition~\ref{prop:rs_dominates_noma_fixed_xi}}
\label{app:proof_rs_dominates_noma_fixed_xi}

For a given sensing illumination factor $\xi\in[0,1]$, define
\begin{equation}
P\triangleq p_c\norm{\mathbf h_c}^2,
\quad
V_{\xi}\triangleq
p_s\alpha_s\xi\norm{\mathbf h_s}^4,
\end{equation}
and let $x\triangleq\alpha P$. Using
$\abs{\mathbf h_s^T\mathbf w}^2=\xi\norm{\mathbf h_s}^2$, define
\begin{equation}
a
\triangleq
V_{\xi}\left(1-\frac{x\rho}{1+x}\right)
=
V_{\xi}\frac{1+(1-\rho)x}{1+x}.
\end{equation}
For $\xi\rho=0$, the two endpoint operating points coincide and the
containment follows directly. Hence, consider $\xi\rho>0$, for which
$a$ decreases monotonically with $\alpha$ and parameterizes the
boundary between $A_{\xi}$ and $B_{\xi}$. The \gls{sr} is
\begin{equation}
\label{eq:app_Rs_a}
R_s(a)
=
\frac{1}{L}\log_2(1+La).
\end{equation}

\par Under the residual-aware cancellation model,
$Q_e=V_{\xi}/(1+La)$ and
\begin{equation}
\frac{xQ_e\rho}{1+x}
=
\frac{V_{\xi}-a}{1+La}.
\end{equation}
Combining the two stream rates therefore gives
\begin{equation}
\label{eq:app_Rc_res_a}
\begin{split}
R_c^{\mathrm{res}}(a)
=
K_{\xi}
&-\log_2(1+a)
+\log_2(1+La)\\
&-\log_2\!\left(1+V_{\xi}+(L-1)a\right),
\end{split}
\end{equation}
where
$K_{\xi}\triangleq
\log_2\!\left(1+P+V_{\xi}(1+P(1-\rho))\right)$
is independent of $a$. Letting $y=1+La$, direct differentiation of
\eqref{eq:app_Rs_a} and \eqref{eq:app_Rc_res_a} gives \eqref{eq:app_second_deriv_res}
\begin{figure*}[!t]
\begin{equation}
\label{eq:app_second_deriv_res}
\frac{d^2R_c^{\mathrm{res}}}{dR_s^2}
=
-L^2\ln 2\,y
\left[
\frac{L-1}{(y+L-1)^2}
+
\frac{(L-1)(1+LV_{\xi})}
{\big((L-1)y+1+LV_{\xi}\big)^2}
\right]
\leq 0,
\end{equation}
\hrulefill
\vspace{-1.2em}
\end{figure*}
for $L\geq1$, with strict inequality for $L>1$. Hence, for every fixed $\xi$, the 
\gls{rs}-inspired boundary is concave between $A_{\xi}$ and
$B_{\xi}$, strictly so for $L>1$, and therefore lies on or above the
straight chord joining the endpoints. For perfect post-sensing
cancellation, $R_c^{\mathrm{perf}}(a)=K_{\xi}-\log_2(1+a)$ is likewise
concave in $R_s$, and strictly so for $L>1$. Consequently,
$\operatorname{conv}\{A_{\xi},B_{\xi}\}
\subseteq\mathcal{C}_{\mathrm{RS}}(\xi)$.
Since this holds for every $\xi\in[0,1]$, taking the union over $\xi$
yields
$\mathcal{C}_{\mathrm{NOMA}}^{\mathrm{TS}}
\subseteq\mathcal{C}_{\mathrm{RS}}$.   
\end{anup}
\bibliographystyle{IEEEtran}
\bibliography{reference}
\end{document}